\newtheorem{theorem}{Theorem}
\newtheorem{corollary}{Corollary}
\newtheorem*{remark}{Remark}
\newtheorem{lemma}{Lemma}
\newtheorem{proposition}{Proposition}
\theoremstyle{definition}
\newtheorem*{definition}{Definition}
\newcommand{\nat}{\mathbb{N}}
\newcommand{\valph}[3]{\langle#1,#2,#3\rangle}
\newcommand{\lsu}{L_{\mathrm{su}}}
\newcommand{\lmwm}{L_{\mathrm{mwm}}}
\newcommand{\wred}{\le_{\mathrm{W}}}
\newcommand{\A}{\mathcal{A}}
\newcommand{\B}{\mathcal{B}}
\newcommand{\qin}{q_0}
\newcommand{\Aless}{\prec}
\newcommand{\prio}{ht}
\newcommand\e{\ensuremath{\epsilon}}
\newcommand\w{\ensuremath{\omega}}
\newcommand{\alphabet}{A}
\newcommand{\alphabetE}{A_\e}
\newcommand{\GammaB}{\Gamma_{\!\bot}}
\newcommand\bbN{\mathbb N}
\newcommand{\BSigma}{{\boldsymbol \Sigma}}
\newcommand{\BPi}{{\boldsymbol \Pi}}
\tikzstyle{stack}=[inner sep=0pt,minimum height=.9cm,draw,rectangle]
\title{Decision Problems for Deterministic Pushdown Automata on Infinite Words}
\author{Christof L\"oding
\institute{Lehrstuhl Informatik 7 \\ RWTH Aachen University\\ Germany}
\email{loeding@cs.rwth-aachen.de}
}
\begin{document}
\maketitle

\begin{abstract}
The article surveys some decidability results for DPDAs on infinite
words ($\omega$-DPDA). We summarize some recent results on the
decidability of the regularity and the equivalence problem for the
class of weak $\omega$-DPDAs.  Furthermore, we present some new
results on the parity index problem for $\omega$-DPDAs. For the
specification of a parity condition, the states of the omega-DPDA are
assigned priorities (natural numbers), and a run is accepting if the
highest priority that appears infinitely often during a run is
even. The basic simplification question asks whether one can determine
the minimal number of priorities that are needed to accept the
language of a given $\omega$-DPDA. We provide some decidability results
on variations of this question for some classes of $\omega$-DPDAs.
\end{abstract}

\section{Introduction}
Finite automata, which are used as a tool in many areas of computer
science, have good closure and algorithmic properties. For example,
language equivalence and inclusion are decidable (see
\cite{HopcroftU79}), and for many subclasses of the regular languages
it is decidable whether a given automaton accepts a language inside
this subclass (see \cite{Straubing94} for some results of this
kind). In contrast to that, the situation for pushdown automata is
much more difficult. For nondeterministic pushdown automata, many
problems like language equivalence and inclusion are undecidable (see
\cite{HopcroftU79}), and it is undecidable whether a given
nondeterministic pushdown automaton accepts a regular language. The
class of languages accepted by deterministic pushdown automata forms a
strict subclass of the context-free languages. While inclusion remains
undecidable for this subclass, a deep result from \cite{Senizergues01}
shows the decidability of the equivalence problem. Furthermore, the
regularity problem for deterministic pushdown automata is also
decidable \cite{Stearns67,Valiant75}.

While automata on finite words are a very useful model, some
applications, in particular in verification by model checking (see
\cite{BaierK08}), require extensions of these models to infinite
words. Although the theory of finite automata on infinite words
(called \w-automata in the following) usually requires more
complex constructions because of the more complex acceptance
conditions, many of the good properties of finite automata on finite
words are preserved (see \cite{PerrinP04} for an overview). Pushdown
automata on infinite words (pushdown \w-automata) have been
studied because of their ability to model executions of
non-terminating recursive programs. In \cite{EsparzaHRS00} efficient
algorithms for checking emptiness of B\"uchi pushdown automata are
developed (a B\"uchi automaton accepts an infinite input word if it
visits an accepting state infinitely often during its run). Besides
these results, the algorithmic theory of pushdown \w-automata
has not been investigated very much. For example, in \cite{CohenG78}
the decidability of the regularity problem for deterministic pushdown
\w-automata has been posed as an open question and to our
knowledge no answer to this question is known. Furthermore, it is
unknown whether the equivalence of deterministic pushdown
\w-automata is decidable.

The first part of this article summarizes some recent partial results
on the regularity and equivalence problem for deterministic pushdown
\w-automata from \cite{LodingR12}.

In the second part we consider decision problems concerning the
acceptance condition of the automata. One of the standard acceptance
conditions of \w-automata is the parity condition (see \cite{GTW2002}
for an overview of possible acceptance conditions). Such a condition
is specified by assigning priorities (natural numbers) to the states
of the automaton, using even priorities for ``good'' states and odd
priorities for the ``bad'' states. A run is accepting if among the
states that occur infinitely often the highest priority is even. For
deterministic automata (independent of the precise automaton model),
one can show that more languages can be accepted if more priorities
are used. So the number of priorities required for accepting a
language is a measure for the complexity of the language. A natural
decision problem arising from that, is the question of determining for
a given deterministic parity automaton the smallest number of
priorities that are needed for accepting the language of the
automaton. This referred to as the parity index problem.

For finite deterministic parity automata, the minimal number of
priorities required for accepting the language can be computed in
polynomial time, and a corresponding automaton can be constructed by
simply reassigning priorities in the allowed range to the states of
the given automaton \cite{Carton99}.  For deterministic pushdown
parity automata it was shown in \cite{Linna77} that it is decidable
whether a given automaton is equivalent to a deterministic pushdown
B\"uchi automaton. We present here the general result that the parity
index problem for deterministic pushdown parity automata is
decidable. The method is based on parity games on pushdown graphs and
has already been described in the PhD thesis \cite{RepkeDiss}.

We further consider a model of deterministic pushdown automata in
which the types of the action on the pushdown store are determined by
the input symbols, called visibly pushdown automata (VPA)
\cite{AM2004VPL}.  In these automata, the input alphabet is partitioned
into three sets of symbols, referred to as call, return, and internal
symbols. On reading a call, the pushdown automaton has to add a symbol
to the stack, on reading a return, it has to remove a symbol from the
stack, and on reading an internal, it does not alter the stack. It
turns out that, for a fixed partition of the input alphabet, this
class of automata has good closure and algorithmic properties
\cite{AM2004VPL}. On finite words it is even possible to determinize
such VPAs. However, it turns out that B\"uchi VPAs cannot, in general,
be transformed into equivalent deterministic Muller or parity VPAs
\cite{AM2004VPL}. To resolve this problem, in \cite{LMS2004VPG} a
variation of the parity condition has been proposed, referred to as
stair parity condition. It is defined as a standard parity condition,
however, it is not evaluated on the sequence of all states but only on
the sequence of states that occur on steps of the run. A step is a
configuration in the run such that no later configuration has a
smaller stack height. In \cite{LMS2004VPG} it is shown that each
nondeterministic B\"uchi VPA can be transformed into an equivalent
deterministic stair parity VPA. We prove here that the stair parity
index problem for deterministic VPAs can be solved in polynomial
time. We also consider the question whether a given stair parity VPA
is equivalent to a parity VPA (with a standard parity condition
instead of a stair condition). For the particular case of stair
B\"uchi VPAs we show that this problem is decidable. 

The remainder of this paper is structured as follows. In
Section~\ref{sec:preliminaries} we introduce some basic terminology
and definitions. In Section~\ref{sec:regularity-equivalence} we
consider the regularity and equivalence problem for
$\omega$-DPDAs. Section~\ref{sec:parity-index} is about the parity
index of parity DPDAs and stair parity DVPAs. In Section~\ref{sec:removing-stair} we
show how to decide whether the stair condition is needed for accepting
the language of a given stair B\"uchi DVPAs. In
Section~\ref{sec:conclusion} we give a short conclusion.

\section{Preliminaries} \label{sec:preliminaries}
We denote the set of natural numbers (including $0$) by $\bbN$.
For a set $S$ we denote its cardinality by $|S|$.
Let $\alphabet$ be an alphabet, i.e., a finite set of symbols, then
$\alphabet^*$ is the set of finite words over $\alphabet$, and
$\alphabet^\w$ the set of $\w$-words over $\alphabet$, i.e., infinite
sequences of $\alphabet$ symbols indexed by the natural numbers.  The
subsets of $\alphabet^*$ are called languages, and subsets of
$\alphabet^\w$ are called $\w$-languages. The length of a finite word
$w \in A^*$ is denoted by $|w|$, and the empty word is $\e$.  We
assume the reader to be familiar with regular languages, i.e., the
languages specified by regular expressions or equivalently by finite
state automata (see, for example, \cite{HopcroftU79} for basics on
regular languages).  

We are mainly concerned with deterministic pushdown automata in this
work. We first define pushdown machines, which are pushdown automata
without acceptance condition. We then obtain pushdown automata by
adding an acceptance condition.

A \emph{deterministic pushdown machine} $\mathcal M =
(Q,\alphabet,\Gamma,\delta,q_0,\bot)$ consists of
\begin{itemize}
\item
a finite state set $Q$ and initial state $q_0\in Q$,
\item
a finite input alphabet $\alphabet$ (we abbreviate $\alphabetE
= \alphabet\cup\{\e\}$),
\item
a finite stack alphabet $\Gamma$ and initial stack symbol
$\bot\not\in\Gamma$ (let $\GammaB = \Gamma\cup\{\bot\}$),
\item
a partial transition function $\delta:
Q\times\GammaB\times\alphabetE \to Q\times\GammaB^*$ such that
for each $p \in Q$ and $A \in \GammaB$:
\begin{itemize}
\item $\delta(p,Z,a)$ is defined for all $a \in \alphabet$ and
$\delta(p,Z,\e)$ is undefined, or the other way round.
\item For each transition $\delta(p,Z,a) = (q,W)$ with $a \in \alphabetE$
 the bottom symbol $\bot$ stays at the bottom of
the stack and only there, i.e., $W\in\Gamma^*\bot$ if $Z=\bot$ and
$W\in\Gamma^*$ if $Z\neq\bot$.
\end{itemize}
\end{itemize}
The set of configurations of $\mathcal M$ is $Q\Gamma^*\bot$ where
$q_0\bot$ is the initial configuration. The stack consisting only of
$\bot$ is called the empty stack. A configuration $q\sigma$ is also
written $(q,\sigma)$. For a given input word $w\in\alphabet^*$
or $w\in\alphabet^\w$, a finite resp.\ infinite sequence
$q_0\sigma_0,q_1\sigma_1,\ldots$ of configurations with $q_0\sigma_0=q_0\bot$ is a
run of $w$ on $\mathcal M$ if there are $a_i \in \alphabet_\e$ with
$w=a_1 a_2 \cdots$ and $\delta(q_i,Z,a_{i+1}) = (q_{i+1},U)$ is such
that $\sigma_i=ZV$ and $\sigma_{i+1}=UV$ for some stack suffix $V\in\GammaB^*$.

For finite words, we consider the model of a deterministic
pushdown automaton (DPDA) $\mathcal A = (\mathcal M,F)$ consisting of
a deterministic pushdown machine $\mathcal M =
(Q,\alphabet,\Gamma,\delta,q_0,\bot)$ and a set of final states
$F\subseteq Q$.  It accepts a word $w\in\alphabet^*$ if $w$
induces a run ending in a final state.  These words form the language
$L_*(\mathcal A)\subseteq\alphabet^*$.
For \w-words, we consider two types of acceptance conditions, namely
B\"uchi and parity conditions.  A B\"uchi DPDA $\mathcal A = (\mathcal
M,F)$ is specified in the same way as a DPDA on finite words. The
$\w$-language $L_\w(\A)$ defined by $\A$ is the set of all $\w$-words
$w$ for which the run of $\A$ on $w$ contains a state from $F$ at
infinitely many positions.

For a parity DPDA, the acceptance condition is specified by a function
$\Omega: Q \rightarrow \bbN$, which assigns a number to each state,
which is referred to as its priority. A run is accepting if the highest
priority that occurs infinitely often is even. Note that B\"uchi
conditions can be specified as parity conditions by assigning priority
2 to states in $F$ and priority 1 to states outside $F$.

In Section~\ref{sec:regularity-equivalence} we consider the class of
weak DPDAs. These are parity DPDAs, in which the transitions can never
lead from one state $q$ to another state $q'$ with a smaller
priority. Hence, in a run of a weak DPDA the sequence of priorities is
monotonically increasing, which implies that the sequence is
ultimately constant. It follows that each weak DPDA is equivalent to
the B\"uchi DPDA that uses the set of states with even priority as set
of final states. We therefore also use term weak B\"uchi DPDAs to
emphasize that it is a subclass of B\"uchi DPDAs.

In general, we refer to DPDAs on infinite words as $\w$-DPDAs if we do
not explicitly specify the type of acceptance. For simplicity, we
assume that infinite sequences of $\e$-transitions are not possible in
$\w$-DPDAs. Such sequences can be eliminated by redirecting certain
$\e$-transitions into corresponding sink states (the acceptance status
of such a state would depend on the exact semantics one uses for runs
that end in an infinite $\e$-sequence). It is sufficient to compute
the pairs $(q,Z)$ of states $q$ and top stack symbols $Z$ such that
there is a run of $\e$-transitions leading from $qZ\bot$ to some
configuration of the form $qZWZ\bot$, such that the $Z$ at the bottom
of the stack is never removed during this run. These pairs can be
computed efficiently (see \cite{EsparzaHRS00}), and it is not difficult
to see that redirecting the $\e$-transitions from these pairs $(q,Z)$
is sufficient for eliminating all infinite $\e$-sequences.


We also consider the model of deterministic visibly pushdown automata
(DVPA) \cite{AM2004VPL}. These automata are defined with respect to a
partitioned alphabet $\alphabet=\alphabet_c \cup \alphabet_i \cup
\alphabet_r$, where $\alphabet_c$ contains all letters that can only
occur in transitions pushing some symbol onto the stack (call
symbols), $\alphabet_r$ those forcing the automaton to pop a symbol
from the stack (return symbols), and $\alphabet_i$ those leaving the
stack unchanged (internal symbols). Furthermore, DVPAs do not have
$\e$-transitions. We also adopt the general convention that VPAs do
not consider the top-most stack symbol in their transitions. This
simplifies several arguments. We can make this assumption without loss
of generality, because it is possible to always keep track of the
top-most stack symbol in the control state.

Formally, a deterministic visibly pushdown machine over the partitioned
alphabet $\alphabet=\alphabet_c \cup \alphabet_i \cup \alphabet_r$ is
of the form $\mathcal M = (Q,\alphabet,\Gamma,\delta,q_0,\bot)$, where
$\delta$ consists of three transition functions
\[
\begin{array}{l}
\delta_c: Q \times \alphabet_c \rightarrow Q \times \Gamma \\
\delta_r: Q \times \Gamma \times \alphabet_r \rightarrow Q  \\
\delta_i: Q \times \alphabet_i \rightarrow Q \\
\end{array}
\]
Instead of defining the semantics of these transitions directly, we
simply describe how the corresponding transitions in a standard DPDA
would look like.  A call transition $\delta_c(q,c) = (p,Z)$
corresponds to a set of transitions $\delta(q,Y,c) = (p,ZY)$ for each
$Y \in \GammaB$. A return transition $\delta_r(q,Z,r) = p$ corresponds
to the transition $\delta_r(q,Z,r) = (p,\e)$, and an internal
transition $\delta_i(q,i) = p$ to a set of transitions $\delta(q,Y,i)
= (p,Y)$ for each $Y \in \GammaB$.  Note that this definition does not
admit transitions for return symbols on the empty stack. In
\cite{AM2004VPL} such transitions are possible, but we prefer to use
the simpler model here to ease the presentation.

By adding an acceptance condition, we obtain DVPAs as in the general
case. As for $\omega$-DPDAs, we are interested in $\omega$-DVPAs with
B\"uchi or parity condition. However, we also consider a variant of
the parity condition referred to as stair parity condition
\cite{LMS2004VPG}. The condition is specified in the same way as
before, however, it is evaluated only on a subsequence of the run,
namely on the sequence of steps, as defined below.

A configuration $q\sigma$ in a run of a DVPA $\A$ is called a step if
the stack height of all configurations $q'\sigma'$ that come later in
the run is bigger than the stack height of $q\sigma$, i.e., $|\sigma|
\le |\sigma'|$. Note that the positions of the steps do not depend
on the automaton, but only on the input word, because the type of the
stack operation is determined for each input symbol. We can now define
stair visibly pushdown automata. The only difference to visibly
pushdown automata is that they evaluate the acceptance condition only
for the subsequence of the run containing consisting of the steps.

In other words, a stair parity DVPA has the same components as a
parity DVPA. An input is accepted if in the run on this input the
maximal priority that occurs infinitely often on a step is even. In
the same way we obtain stair B\"uchi DVPAs, which accept if an
accepting state occurs on infinitely many steps.

We end this section by introducing some more terminology for visibly
pushdown automata that is used in Sections~\ref{sec:parity-index} and
\ref{sec:removing-stair}. 

The set of well matched words over $\alphabet=\alphabet_c \cup
\alphabet_i \cup \alphabet_r$ is, intuitively speaking, the set of
well-balanced words in which for each position with a call symbol
there is a later position at which this call is ``closed'' by some
return symbol (and vice versa, each return position has a
corresponding previous call position). Formally, the set is defined
inductively as follows:
\begin{itemize}
\item Each $a \in \alphabet_i$ is a well matched word.
\item If $u$ and $v$ are well-matched words, then $uv$ is a well
  matched word.
\item If $w$ is a well matched word, then $cwr$ is a well-matched word
  for each $c \in \alphabet_c$ and each $r \in \alphabet_r$.
\end{itemize}

The words that are created by the last rule are referred to as
minimally well-matched words. Let $\lmwm$ denote this set, i.e., the
words of the form $cwr$ with a call $c$, a return $r$, and a
well-matched word~$w$.

The canonical language that can be accepted by a stair B\"uchi DVPA
but by no parity DVPA is the language $\lsu$ of strictly unbounded
words, containing all words over $\valph{\{c\}}{\emptyset}{\{r\}}$
with an infinite number of unmatched calls. More formally, an infinite
word is in $\lsu$ if it is of the form $w_1cw_2cw_3c \cdots$ for
well-matched words $w_i$. In \cite{AM2004VPL} it is shown that $\lsu$
cannot be accepted by a parity DVPA. But it is easy to construct a
stair B\"uchi DVPA $\A$ for $\lsu$ using only a single stack symbol
and one accepting and one non-accepting state (see \cite{LMS2004VPG}),
where $\A$ moves into the accepting state for each $c$, and into the
non-accepting state for each $r$. Note that the position after reading
a $c$ is a step in the run iff this $c$ does not have a matching
return. Thus, there are infinitely many unmatched calls iff there are
infinitely many accepting states on steps.

\section{Regularity and Equivalence} \label{sec:regularity-equivalence}
In this section we summarize results from \cite{LodingR12} that show
how to solve the regularity problem and the equivalence problem for
weak $\omega$-DPDAs. The proof uses a reduction to the corresponding
problems for DPDAs on finite words. More details on these results can
be found in \cite{LodingR12} and in \cite{RepkeDiss}.

The regularity problem for DPDA is the problem of deciding for a given
DPDA whether it accepts a regular language. It has been shown to be
decidable in \cite{Stearns67} and the complexity has been improved in
\cite{Valiant75}.
\begin{theorem}[\cite{Stearns67}] \label{the:DPDA-regularity}
The regularity problem for DPDAs is decidable.
\end{theorem}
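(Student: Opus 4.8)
The plan is to reduce the question to finitely many instances of a decidable subproblem. Concretely, I would establish two facts: (i) if $L_*(\A)$ is regular, then its minimal DFA has at most $N(\A)$ states, where $N$ is a \emph{computable} function of the size of $\A$; and (ii) for a fixed DFA $\mathcal{D}$ it is decidable whether $L_*(\mathcal{D}) = L_*(\A)$. Given (i) and (ii), the procedure enumerates all DFAs with at most $N(\A)$ states and tests each for equivalence with $\A$, answering ``$L_*(\A)$ is regular'' exactly if some test succeeds.

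For the set-up behind (i), I would first normalise $\A$ so that on every input it makes only finitely many $\e$-moves and eventually consumes every letter it can read; these are standard transformations preserving $L_*(\A)$. Then for each finite word $w$ that $\A$ processes, the residual $w^{-1}L_*(\A)$ is completely determined by the configuration $c_w=(q_w,\sigma_w)$ reached (words on which $\A$ blocks all contribute the residual $\emptyset$). By the Myhill--Nerode theorem, $L_*(\A)$ is regular iff there are only finitely many such residuals, which is the case iff the reachable configurations fall into finitely many classes under the relation $c \approx c'$ that holds when $c$ and $c'$ accept the same language. The key is that $\approx$ is controlled by the stack: with a stack $\sigma=\gamma_1\cdots\gamma_n\bot$ associate its \emph{profile} $\pi(\sigma)$, recording for every state $p$ the language accepted from $(p,\sigma)$. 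A run started with $\gamma_1$ on top either never pops $\gamma_1$ — and then depends only on $p$ and $\gamma_1$ — or pops it and continues from the stack $\gamma_2\cdots\gamma_n\bot$; hence $\pi(\sigma)$ is a function $F_{\gamma_1}$ of $\gamma_1$ and of $\pi(\gamma_2\cdots\gamma_n\bot)$ alone. So the profiles actually occurring in $\A$ form the closure of $\pi(\bot)$ under the finitely many maps $F_\gamma$, restricted to the regular set of reachable stacks, and $L_*(\A)$ is regular iff only finitely many profiles occur.

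Fact (ii) is comparatively easy: deterministic context-free languages are effectively closed under complement, and the intersection of a regular language with a context-free language is context-free with decidable emptiness. Hence $L_*(\mathcal{D})=L_*(\A)$ iff both $L_*(\mathcal{D})\cap\overline{L_*(\A)}$ and $\overline{L_*(\mathcal{D})}\cap L_*(\A)$ are empty, and each of these is a context-free language whose emptiness can be tested.

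The remaining step — and the one I expect to be the genuine obstacle — is the \emph{computable bound} in (i): showing that if only finitely many profiles occur, their number (equivalently, the index of $\approx$ on reachable configurations, equivalently the size of the minimal DFA) is at most some $N(\A)$ that can be computed from $\A$. The idea is a pumping argument on reachable runs: along any sufficiently tall reachable stack two suffixes must already carry the same profile by pigeonhole, and the stack segment between them can be pumped — if pumping it always reproduces the same profile, then the full height was not needed to realise that profile, whereas if it ever yields a fresh profile it yields infinitely many, contradicting finiteness. Turning this into a quantitative bound — controlling the interaction of the pumped segment with the control states, with the partiality of $\delta$, and with the requirement that the pumped stacks stay reachable — is the combinatorial heart of Stearns's argument (later refined by Valiant to give a sharper bound and a better running time), and it is there that essentially all the work lies.
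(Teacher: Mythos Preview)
Your proposal is correct and matches the approach sketched in the paper (which itself only summarizes the cited result of Stearns): both rest on the Myhill--Nerode characterisation via the equivalence of configurations, and on the key fact that if $L_*(\A)$ is regular then every reachable configuration above a computable height has an equivalent one of smaller height. The only cosmetic difference is that the paper phrases the algorithm as directly constructing a finite automaton by redirecting transitions to lower equivalent configurations, whereas you phrase it as enumerate-and-test against DFAs up to the bound $N(\A)$; once the computable bound is in hand these are interchangeable, and you rightly identify that bound as the substantive content of Stearns's (and Valiant's) argument.
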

The rough idea of the proof is as follows. Assuming that the language
of the given DPDA is regular, one shows that for each configuration
above a certain height (depending on the size of the DPDA), there is
an equivalent configuration of smaller height. A finite state machine
can then be constructed by redirecting the transitions into higher
configurations to their equivalent smaller counterparts.  Here, two
configurations are considered to be equivalent if they define the same
language when considered as initial configuration of the DPDA. The
decision method for the regularity problem is then based on the
characterization of the regular languages in terms of the
Myhill/Nerode equivalence. For a language $L \subseteq \alphabet^*$, the
Myhill/Nerode equivalence is defined as follows for words $u,v \in
\alphabet^*$:
\[
u \sim_L v \mbox{ iff } \forall w \in \alphabet^*\,:\; uw \in L
\Leftrightarrow vw \in L.
\]
A language of finite words is regular if, and only if, it has finitely
many Myhill/Nerode equivalence classes, and these classes can be used
as states for a canonical finite automaton for the language.

Unfortunately, a corresponding result is not true for $\omega$-regular
languages, in general. However, the subclass of weak $\omega$-regular
languages possesses a similar characterization in terms of an
equivalence \cite{Staiger83}. This similarity raises the question
whether the decidability results for DPDAs on finite words can be
lifted to weak DPDAs on infinite words.

In \cite{LodingR12} it is shown that this is indeed possible. In fact,
it is even possible to reduce questions for weak $\omega$-DPDAs to DPDAs
on finite words. To establish such a connection, we associate a
language $L_*(\A)$ of finite words to a weak $\omega$-DPDA $\A$, which
is obtained by viewing $\A$ as a DPDA on finite words and taking the
set of states with an even priority as the set of final states. 

The first attempt for reducing the regularity problem for weak
$\omega$-DPDAs to the regularity problem for DPDAs would be to test
$L_*(\A)$ for regularity, where $\A$ is the given weak
$\omega$-DPDA. This approach is sound because regularity of $L_*(\A)$
implies $\omega$-regularity of $L_\omega(\A)$: a finite deterministic
automaton for $L_*(\A)$ viewed as a B\"uchi automaton defines
$L_\omega(\A)$ because it visits final states at the same positions
as~$\A$.

That the approach is not complete is illustrated by the following
simple example. Consider the alphabet $\{a,b\}$ and the
$\omega$-language $a^*b^\omega$ of words starting with a finite
sequence of $a$ followed by an infinite sequence of $b$. Obviously,
this language is regular. A weak $\omega$-DPDA $\A$ could proceed as
follows to accept this language. It starts by pushing a symbol onto
the stack for each $a$. When the first $b$ comes in the input, it
changes its state and starts popping the stack symbols again. Once the
bottom of the stack is reached, it changes to an accepting state and
remains there as long as it reads further $b$ (if another $a$ comes,
then the input is rejected). Since the finite $a$-sequence is followed
by infinitely many $b$, it is guaranteed that $\A$ reaches the
accepting state if the input is from $a^*b^\omega$. Note that this is
a weak $\omega$-DPDA because it can change once from non-accepting to
accepting states, and once more back to non-accepting states.  The
language $L_*(\A)$ of this weak $\omega$-DPDA is the set of all finite
words of the form $a^mb^n$ with $n \ge m$ because $\A$ reaches the
accepting state only after it has read as many $b$ as $a$. Thus,
$L_*(\A)$ is non-regular although $L_\omega(\A)$ is.

For this example, the problem would be solved if $\A$ switches to an
accepting state as soon as the first $b$ is read (instead of deferring
this change to the stack bottom).  In general, one can
show that each weak $\omega$-DPDA can be transformed in such a way
that the above reduction to the regularity test for $L_*(\A)$, as shown
be the following theorem. 

\begin{theorem}[\cite{LodingR12}] \label{the:weak-parity-reduction}
There is a normal form for weak $\omega$-DPDAs with the following
properties:
\begin{enumerate}
\item For a weak $\omega$-DPDA $\A$ in normal form, the language
  $L_\omega(\A)$ is $\omega$-regular if, and only if, $L_*(\A)$ is regular.
\item Given two weak $\omega$-DPDAs $\A$ and $\B$ in normal form,
  $L_\omega(\A) = L_\omega(\B)$ if, and only if, $L_*(\A) = L_*(\B)$.
\end{enumerate}
\end{theorem}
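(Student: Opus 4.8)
The plan is to define the normal form by a single structural condition on runs of a weak $\omega$-DPDA: whenever a configuration $q\sigma$ is reached whose residual language (the set of finite words accepted from $q\sigma$ as initial configuration of the DPDA on finite words) already equals the residual $\omega$-language determined by the future of the run, the automaton should not defer changes of acceptance status that are forced by the $\omega$-semantics. Concretely, I would say that $\A$ is in normal form if for every reachable configuration the decision ``will the run eventually stay among even-priority states'' is reflected as early as possible in the priorities along the run; since $\A$ is weak, the priority sequence along any run is monotone and hence eventually constant, so ``eventually accepting'' is a well-defined property of a configuration, and the normal form asks that the state's priority be even exactly when the continuation from that configuration is eventually accepting. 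The first step is thus to make this precise and to show that every weak $\omega$-DPDA can be converted into normal form: this is done by a product construction with a finite automaton (or a bounded look-ahead argument) that tracks, for the current top-of-stack region, whether acceptance is already ``inevitable'', reassigning priorities accordingly. One must check that this construction preserves determinism, preserves $L_\omega(\A)$, and keeps the automaton weak; the weakness is the delicate point, because reassigning priorities could in principle create a transition from an even state to an odd state, and the argument must show that the ``inevitability'' status is itself monotone along runs so that no such transition arises.

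For part~(1), the ``if'' direction is exactly the soundness observation already made in the text: if $L_*(\A)$ is regular, a finite deterministic automaton for it, read as a Büchi automaton, accepts $L_\omega(\A)$ because it visits final states at the same positions as $\A$; this holds regardless of the normal form. For the ``only if'' direction I would argue contrapositively: if $L_*(\A)$ is non-regular, then by Myhill/Nerode it has infinitely many classes, so there are infinitely many pairwise inequivalent reachable configurations $q_i\sigma_i$ with distinct residual languages $L_i \subseteq A^*$. The key step is to use the normal form to upgrade these finite-word inequivalences to $\omega$-word inequivalences: because $\A$ is weak and in normal form, the residual $\omega$-language of a configuration is exactly the $\omega$-limit/topological closure-type object determined by $L_i$ — more precisely, the residual $\omega$-language is $\omega$-regular-equivalent to $L_i$ via the Staiger characterization of weak $\omega$-regular languages in terms of a finite index equivalence. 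Concretely, the normal form guarantees that $u \sim_{L_*(\A)} v$ fails witnesses a finite word $w$ with, say, $uw \in L_*(\A)$ and $vw \notin L_*(\A)$, and since in normal form the acceptance status after $uw$ and after $vw$ is already ``stable'' in the appropriate sense, one can extend $w$ to an $\omega$-word that separates $u$ and $v$ with respect to $L_\omega(\A)$'s weak-Nerode congruence. Infinitely many classes there contradict $\omega$-regularity of $L_\omega(\A)$ (by Staiger's characterization), giving the contrapositive.

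For part~(2), I would proceed symmetrically on both sides. The direction $L_*(\A) = L_*(\B) \Rightarrow L_\omega(\A) = L_\omega(\B)$ is immediate and needs no normal form: $\A$ and $\B$ visit final states at exactly the positions where their associated DPDAs on finite words accept a prefix, so equal finite-word languages force, prefix-by-prefix, the same set of ``infinitely often final'' $\omega$-words. For the converse, suppose $L_*(\A) \neq L_*(\B)$, say some finite word $x$ lies in $L_*(\A)\setminus L_*(\B)$. Using the normal form on both automata, the fact that $x$ is accepted by the finite-word reading of $\A$ but not of $\B$ means that after reading $x$, $\A$ is in an even state that is ``inevitably'' even (by the normal-form condition), while $\B$ is in an odd state that is ``inevitably'' odd, or at least that the residual $\omega$-languages disagree on some infinite continuation forced by this discrepancy; one then exhibits an $\omega$-word $x y$ that is accepted by exactly one of $\A,\B$, so $L_\omega(\A) \neq L_\omega(\B)$. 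Here I would need a small lemma — again a consequence of the normal form and weakness — that an ``inevitably accepting'' configuration accepts at least one $\omega$-continuation, and dually, so that the separating finite witness can always be completed to a separating infinite one.

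The main obstacle I expect is the construction and verification of the normal form itself, specifically proving that the priority reassignment keeps the automaton weak. The ``inevitability of acceptance'' is a property that depends on the entire (unbounded) future of the run, so one must show it can be computed with finite memory on top of the stack — this is plausible because the relevant information is about whether the current state is in a bottom strongly-connected-like component of the reachable configuration graph modulo stack content, and a bounded product suffices — and then one must verify the monotonicity that makes weakness survive. Once the normal form is in place with these two properties (determinism-preserving, $L_\omega$-preserving, weakness-preserving, and ``acceptance status stabilises''), both items of the theorem follow from the Myhill/Nerode and Staiger characterisations by the contrapositive arguments sketched above; the finite-word-to-$\omega$-word transfer lemma is routine given the normal form but would be the second-most delicate point.
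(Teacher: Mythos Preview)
The paper does not prove this theorem; it is a survey that states the result and cites \cite{LodingR12} for the construction and argument. The only information in the paper about the normal form is the $a^*b^\omega$ example and the remark that the automaton should ``switch to an accepting state as soon as the first $b$ is read'' rather than waiting for the stack to empty. So there is no in-paper proof to compare your proposal against directly.

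That said, your proposed definition of the normal form has a genuine gap. You write that the state's priority should be even ``exactly when the continuation from that configuration is eventually accepting,'' and throughout you treat ``the run'' or ``the continuation'' from a configuration as a single determined object whose acceptance is either inevitable or not. But a configuration has many continuations, some accepting and some rejecting: in the paper's own example, from the configuration reached after $a^n b$ the suffix $b^\omega$ is accepted while $a b^\omega$ is rejected, so no single inevitability bit can be attached to that configuration. Your arguments for both (1) and (2) then rest on this ill-defined notion---e.g., ``$\A$ is in an even state that is `inevitably' even \ldots\ while $\B$ is in an odd state that is `inevitably' odd''---and do not go through as written. The paper's hint points rather to an optimality condition on the priority assignment itself (raise priorities as far as the weakness constraint and preservation of $L_\omega$ permit), which is a property of states, not of individual runs; under such a condition the state reached after $a^n b$ becomes accepting because making it so does not change $L_\omega$. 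With the definition repaired along these lines your overall strategy (establish a normal form, then transfer via Myhill/Nerode and Staiger) is the right shape, but both the construction of the normal form and the finite-to-infinite transfer lemmas require substantially more than what you have sketched.
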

Combining the first part of Theorem~\ref{the:weak-parity-reduction}
with Theorem~\ref{the:DPDA-regularity}, we get the decidability of the
regularity problem for weak $\omega$-DPDAs.
\begin{corollary}[\cite{LodingR12}]
The regularity problem for weak $\omega$-DPDAs is decidable.
\end{corollary}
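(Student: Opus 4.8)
The plan is to combine the two decidability results already available in the excerpt: Theorem~\ref{the:DPDA-regularity} (regularity of DPDAs on finite words is decidable) and Theorem~\ref{the:weak-parity-reduction}(1) (a normal form for weak $\omega$-DPDAs under which $\omega$-regularity of $L_\omega(\A)$ is equivalent to regularity of $L_*(\A)$). The algorithm is then immediate: given a weak $\omega$-DPDA $\A$, first transform it into an equivalent weak $\omega$-DPDA $\A'$ in normal form using the effective construction underlying Theorem~\ref{the:weak-parity-reduction}; then read off from $\A'$ the associated DPDA on finite words accepting $L_*(\A')$ (take the states of even priority as final states); then run the decision procedure of Theorem~\ref{the:DPDA-regularity} on this DPDA to decide whether $L_*(\A')$ is regular; finally, answer ``yes'' iff that test succeeds.

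First I would argue correctness of this procedure. By construction $\A'$ is a weak $\omega$-DPDA with $L_\omega(\A') = L_\omega(\A)$ and $\A'$ is in normal form, so part~(1) of Theorem~\ref{the:weak-parity-reduction} applies: $L_\omega(\A')$ is $\omega$-regular if and only if $L_*(\A')$ is regular. The latter condition is exactly what the procedure of Theorem~\ref{the:DPDA-regularity} decides, when run on the DPDA on finite words obtained from $\A'$ by taking the even-priority states as final states (this is a purely syntactic reinterpretation of the machine and requires no further work). Hence the procedure outputs ``yes'' precisely when $L_\omega(\A)$ is $\omega$-regular, which is the regularity problem for weak $\omega$-DPDAs.

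The only nontrivial ingredient is the normal-form transformation of Theorem~\ref{the:weak-parity-reduction}, but since that theorem is quoted here as an available result (from \cite{LodingR12}) together with its claimed properties, nothing further needs to be proved: the transformation is effective, so the composition of the three steps above is an algorithm. I would therefore expect no real obstacle in this proof; the corollary is genuinely a one-line consequence of the two stated theorems. If anything, the point to be careful about is merely to note that the normal-form construction is \emph{effective} (not just existential), so that the first step of the algorithm can actually be carried out — but this is guaranteed by the way Theorem~\ref{the:weak-parity-reduction} is stated (``there is a normal form \ldots''), and is in any case established in the referenced work.
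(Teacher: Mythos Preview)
Your proposal is correct and matches the paper's approach exactly: the paper derives the corollary in one line by combining Theorem~\ref{the:weak-parity-reduction}(1) with Theorem~\ref{the:DPDA-regularity}, which is precisely the argument you spell out. Your additional remark about the effectiveness of the normal-form transformation is a reasonable point of care, but no further work is needed.
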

The second part of the theorem can be used to show the decidability of
the equivalence problem for weak $\omega$-DPDAs, based on the
corresponding deep result for DPDAs.
\begin{theorem}[\cite{Senizergues01}] \label{the:DPDA-equivalence}
The equivalence problem for DPDAs is decidable.
\end{theorem}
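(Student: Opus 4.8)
The plan is to split the problem into an easy semi-decidability of \emph{in}equivalence and a hard semi-decidability of equivalence. One cannot simply test emptiness of a symmetric difference: DPDA languages, though closed under complement, are not closed under intersection, so for the two given DPDAs $\A,\B$ the set $L_*(\A)\mathbin{\triangle}L_*(\B)$ need not even be context-free. After the standard normalizations (bounding $\e$-sequences as in Section~\ref{sec:preliminaries}, assuming acceptance by final state, and placing both machines inside one DPDA over disjoint state and stack alphabets), it suffices to decide whether two \emph{configurations} $p\sigma$ and $q\tau$ of a single DPDA are equivalent, meaning that, taken as initial configuration, the machine accepts the same language of finite words from each; the instance of interest is the pair of initial configurations. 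Inequivalence is then immediately semi-decidable: enumerate $w\in\alphabet^*$ and check whether exactly one of the two runs is accepting, which halts iff $p\sigma$ and $q\tau$ differ. So the entire difficulty is to produce, for every \emph{true} equivalence of configurations, a finite certificate a machine can verify --- equivalently, a sound and complete effective proof system for such equivalences.

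For this I would follow the approach underlying \cite{Senizergues01}: describe the ``behaviour'' of a configuration as an element of a suitable algebra of languages (language equations in the style of Courcelle), and set up a deduction system whose judgments are equalities between finite boolean combinations of configuration-behaviours. Axioms are syntactic identities plus a handful of valid algebraic laws; there are two kinds of inference rules. An \emph{unfolding} rule consumes one input letter $a$ and rewrites a judgment about $p\sigma$ into one about its $a$-successor via $\delta$, so that an equivalence is reduced to equivalences of all successors. A \emph{composition} rule exploits the stack: when $\sigma = \sigma'Z$, the behaviour of $p\sigma'Z$ factors as the finite-state behaviour describing which control states survive erasing the segment $\sigma'$, composed with the behaviour started from the exposed symbol $Z$; here one must treat stack contents as words in a free monoid and reason about the \emph{deterministic rational relations} governing how the control state evolves while a stack segment is popped. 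Soundness of the system is a routine coinductive check, and (via the initial-configuration instance) decidability would follow once the system is shown complete and its proof search shown to terminate.

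Completeness is the main obstacle, and it is why this theorem is a landmark rather than a routine extension of Stearns' argument for Theorem~\ref{the:DPDA-regularity}. One has to supply a well-founded measure guaranteeing that the proof-search tree is finite: intuitively, that along every branch the stack heights one needs stay bounded in terms of the ``level'' reached, and that only finitely many essentially different residual behaviours can recur, so that eventually some current judgment is an instance of an ancestor and the branch closes by induction. Making this rigorous requires the delicate combinatorics of the free monoid of stack contents --- cancellation and Levi-type arguments --- combined with a careful analysis of the deterministic strategies realizing the behaviours; this is precisely the technical core of \cite{Senizergues01} (equivalently, of the equivalence problem for deterministic context-free grammars). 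For the present survey it is enough to invoke that result; I would only remark that the completeness argument has since been recast as a tableau/game method and that a primitive-recursive complexity bound is known, while no elementary algorithm is.
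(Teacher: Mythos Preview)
The paper does not give a proof of this theorem at all: it is stated with the citation \cite{Senizergues01} and immediately used as a black box to derive Corollary~2. So there is nothing to compare your argument against; your proposal already goes far beyond what the paper does by sketching the architecture of S\'enizergues's proof (semi-decidability of inequivalence plus a sound and complete tableau/deduction system for equivalence, with the hard part being the well-founded termination measure). That high-level description is accurate, and your caveats about where the real difficulty lies are appropriate, but for the purposes of this survey the intended ``proof'' is simply the citation.
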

\begin{corollary}[\cite{LodingR12}]
The equivalence problem for weak $\omega$-DPDAs is decidable.
\end{corollary}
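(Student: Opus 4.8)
The plan is to reduce the equivalence problem for weak $\omega$-DPDAs to the equivalence problem for DPDAs on finite words and then invoke Theorem~\ref{the:DPDA-equivalence}. Given two weak $\omega$-DPDAs $\A$ and $\B$, the first step is to bring both of them into the normal form guaranteed by Theorem~\ref{the:weak-parity-reduction}. For this I would rely on the construction underlying that theorem being effective, so that from $\A$ and $\B$ one can compute normal-form weak $\omega$-DPDAs $\A'$ and $\B'$ with $L_\omega(\A') = L_\omega(\A)$ and $L_\omega(\B') = L_\omega(\B)$.

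Once $\A'$ and $\B'$ are in normal form, part~2 of Theorem~\ref{the:weak-parity-reduction} yields $L_\omega(\A') = L_\omega(\B')$ if and only if $L_*(\A') = L_*(\B')$, where $L_*(\cdot)$ is the language of finite words obtained by reading the automaton as a DPDA on finite words whose set of final states is exactly the set of states of even priority. Since this passage is a purely syntactic operation, $\A'$ and $\B'$ directly give DPDAs on finite words for $L_*(\A')$ and $L_*(\B')$. The final step is then to apply Theorem~\ref{the:DPDA-equivalence}: the equality $L_*(\A') = L_*(\B')$ is decidable, hence so is $L_\omega(\A') = L_\omega(\B')$, and therefore $L_\omega(\A) = L_\omega(\B)$. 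Chaining these three reductions gives the decision procedure.

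The only genuine subtlety — and the step I would be most careful about — is the effectiveness and correctness of the normal-form transformation from Theorem~\ref{the:weak-parity-reduction}: one must check that it can actually be carried out algorithmically on a given weak $\omega$-DPDA, that it preserves the accepted $\omega$-language, and that the resulting automaton really satisfies property~2. Everything else is routine once that transformation is in hand, since the heavy lifting is borrowed: Theorem~\ref{the:DPDA-equivalence} is the deep ingredient, and Theorem~\ref{the:weak-parity-reduction} already packages the $\omega$-to-finite reduction, leaving only the syntactic construction of $L_*$ and a single appeal to decidability of DPDA equivalence.
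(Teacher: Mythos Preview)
Your proposal is correct and follows exactly the approach the paper takes: combine part~2 of Theorem~\ref{the:weak-parity-reduction} with Theorem~\ref{the:DPDA-equivalence} to reduce equivalence of weak $\omega$-DPDAs to equivalence of DPDAs on finite words. The paper does not spell out the argument in more detail than this, and your additional care about the effectiveness of the normal-form transformation is well placed but not something the paper addresses explicitly (it is established in \cite{LodingR12}).
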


The two problems for the full class of $\omega$-DPDAs remain open. In
\cite{RepkeDiss} a congruence for $\omega$-languages is identified
that characterizes regularity within the class of $\omega$-DPDA
recognizable languages (a language accepted by an $\omega$-DPDA is
regular if, and only if, this congruence has finitely many equivalence
classes). This might be step towards a solution for the regularity
problem. However, the decidability of characterizing criterion remains
open. 

\section{The Parity Index Problem} \label{sec:parity-index}

In this section we are interested in the problem of reducing the
number of priorities used in a parity condition.  Formally, we
consider the following problem. Given a parity DPDA (or stair parity
DVPA) $\A$, compute the smallest number of priorities required for
accepting $L_\omega(\A)$ with a parity DPDA (or stair parity DVPA).  We refer
to these two variants of the problem as the parity index problem for
DPDAs, and the stair parity index problem for stair parity DVPAs.

For finite parity automata, it suffices to change the priority
assignment, in order to obtain an equivalent automaton with the fewest
number of priorities, and this modified priority function can be
computed in polynomial time \cite{Carton99}. 

For parity DPDAs the situation is different, as illustrated by the
example in Figure~\ref{fig:dvpa-parity-example} (taken from
\cite{S2006DA}). We use a DVPA in the example, where $c_1,c_2$ are
calls, $r_1,r_2$ are returns, $i_1,i_2$ are internals, and $Z_1, Z_2$
are stack symbols.  The transitions on call symbols are annotated with
the stack symbol to be pushed, and for the return symbols with the
stack symbol to be popped.  The priority function of the DVPA on the
left-hand side of Figure~\ref{fig:dvpa-parity-example} (indicated as
labels of the states) is minimal for the state set and the transition
structure.  The problem is caused by the state $q_1$, which is part of
the loop in the upper and the lower branch. However, there is no run
of the automaton that traverses both the upper and the lower
branch. If the first symbol in the input is $c_1$, then the automaton
stores $Z_1$ on the stack. Whenever the automaton reaches $q_1$ in the
future, $Z_1$ will be on top of the stack and the automaton can only
use the top branch. For the lower branch and $c_2$ as the first input
symbol the situation is similar.

Splitting $q_1$ into two copies as done in the DVPA on the right-hand
side of the figure, makes it possible to reassign priorities without
using priority $3$. 

\begin{figure}
\begin{tikzpicture}[node distance=2cm,initial text=,scale=.5]
\tikzstyle{every state}=[inner sep=1pt,minimum size=4mm]
\node[state,initial](q0) {$q_0/0$};
\node[state](q1) [right of=q0]{$q_1/2$};
\node[state](q2) [above right of=q1]{$q_2/1$};
\node[state](q3) [right of=q2]{$q_3/0$};
\node[state](q4) [below right of=q1]{$q_4/3$};
\path[->] (q0) edge node[above]{$\scriptstyle c_1/Z_1$} 
                    node[below]{$\scriptstyle c_2/Z_2$} (q1)
          (q1) edge[bend left] node[left]{$\scriptstyle r_1/Z_1$} (q2)
               edge node[right]{$\scriptstyle r_2/Z_2$} (q4)
               edge[loop right] node[right]{$\scriptstyle i_1$} ()
          (q2) edge[bend left] node[above]{$\scriptstyle i_1$} (q3)
               edge node[right]{$\scriptstyle c_1/Z_1$} (q1)
          (q3) edge[loop right] node[right]{$\scriptstyle i_2$} ()
               edge[bend left] node[below]{$\scriptstyle i_1$} (q2)
          (q4) edge[bend left] node[left]{$\scriptstyle c_2/Z_2$} (q1)
;
\end{tikzpicture}%
\begin{tikzpicture}[node distance=2cm,initial text=,scale=.5]
\tikzstyle{every state}=[inner sep=1pt,minimum size=4mm]
\node[state,initial](q0) {$q_0/0$};
\node[state](q1) [above right of=q0]{$q_1/2$};
\node[state](q1') [below right of=q0]{$q_1'/0$};
\node[state](q2) [right of=q1]{$q_2/1$};
\node[state](q3) [right of=q2]{$q_3/0$};
\node[state](q4) [right of=q1']{$q_4/1$};
\path[->] (q0) edge node[above left]{$\scriptstyle c_1/Z_1$} (q1)
               edge node[below left]{$\scriptstyle c_2/Z_2$} (q1')
          (q1) edge[bend left] node[above]{$\scriptstyle r_1/Z_1$} (q2)
               edge[loop below] node[below]{$\scriptstyle i_1$} ()
          (q1')edge[bend left] node[above]{$\scriptstyle r_2/Z_2$} (q4)
               edge[loop above] node[above]{$\scriptstyle i_1$} ()
          (q2) edge[bend left] node[above]{$\scriptstyle i_1$} (q3)
               edge[bend left] node[below]{$\scriptstyle c_1/Z_1$} (q1)
          (q3) edge[loop right] node[right]{$\scriptstyle i_2$} ()
               edge[bend left] node[below]{$\scriptstyle i_1$} (q2)
          (q4) edge[bend left] node[below]{$\scriptstyle c_2/Z_2$} (q1')
;
\end{tikzpicture}

\caption{On the left-hand side: DVPA with minimal number of priorities
  for the given transition structure; on the right-hand side:
  equivalent DVPA with less priorities\label{fig:dvpa-parity-example}}

\end{figure}
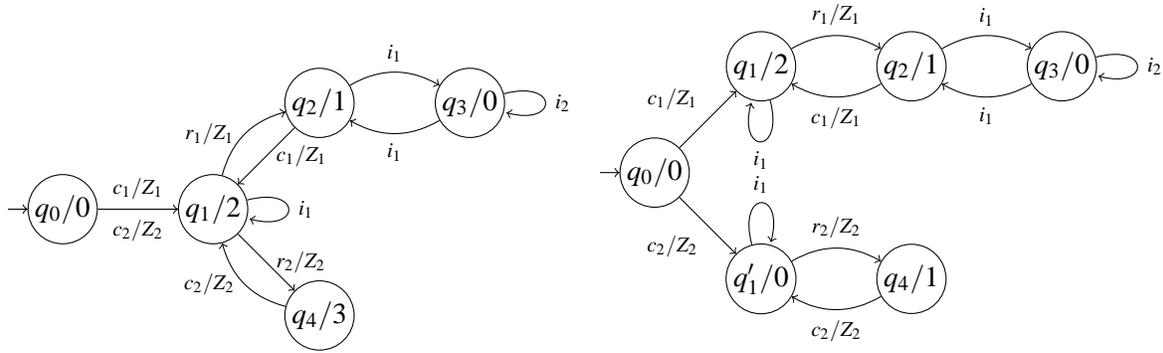

The example illustrates that we need to take a different approach for
computing the parity index of pushdown automata. This approach is also
described in \cite{RepkeDiss}. 

Let $P \subset \bbN$ be a finite set of priorities. A parity DPDA
using only priorities from $P$ is referred to as a $P$-parity DPDA. To
decide whether a given parity DPDA $\A$ has an equivalent $P$-parity
DPDA, consider the following game. There are two players, referred to
as Automaton and Classifier. Automaton starts in the initial
configuration of $\A$ and plays transitions of $\A$. After each move of
Automaton, Classifier chooses one priority from $P$. The idea is that
the classifier wants to prove that there is a $P$-parity DPDA that
accepts $L_\omega(\A)$. If Classifier chooses priority $k$ in a move,
this can be interpreted as ``the parity DPDA that I have in mind would
now be in a state with priority $k$''.

This game can be formalized as a game over a pushdown graph (basically,
the configuration graph of $\A$ enriched by the bounded number of
choices for Classifier). The winning condition states that an infinite
play is won by classifier if, and only if, the two priority sequences,
one induced by the configurations chosen by Automaton, the other given
by the choices of Classifier, are either both accepting or both
rejecting. We refer to this game as the classification game for $\A$
and $P$. The following result can be shown based on results for
computing winning strategies in pushdown games \cite{Walukiewicz01}.

\begin{lemma}
Classifier has a winning strategy in the classification game for $\A$
and $P$ if, and only if, there is $P$-parity DPDA accepting
$L_\omega(\A)$. 
\end{lemma}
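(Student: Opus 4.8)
The plan is to prove both directions by translating between a $P$-parity DPDA recognizing $L_\omega(\A)$ and a winning strategy for Classifier in the classification game.

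\medskip

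\textbf{From a $P$-parity DPDA to a winning strategy.} Suppose $\B$ is a $P$-parity DPDA with $L_\omega(\B) = L_\omega(\A)$. I would let Classifier simulate $\B$ alongside the moves of Automaton: whenever Automaton plays a transition of $\A$ reading a letter $a \in \alphabetE$, Classifier advances the simulation of $\B$ on the same letter (reading $\e$ when Automaton takes an $\e$-transition of $\A$), and announces the priority $\Omega_\B$ of the new state of $\B$. Since $\B$ is deterministic this is a well-defined strategy, and since the simulated stack of $\B$ is part of the pushdown store of the game graph, this is a legitimate pushdown strategy. In an infinite play, Automaton's transitions spell out some input word $w \in \alphabet^\w$, the priority sequence of Automaton's side is exactly the priority sequence of the run of $\A$ on $w$, and the priority sequence announced by Classifier is exactly that of the run of $\B$ on $w$. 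Because $L_\omega(\A) = L_\omega(\B)$, the $\A$-run on $w$ is accepting iff the $\B$-run on $w$ is accepting, so both sequences are simultaneously accepting or simultaneously rejecting; hence Classifier wins. One technical point: one must check that the run of $\B$ really is infinite, which follows from our standing assumption that $\omega$-DPDAs have no infinite $\e$-sequences (after normalizing $\B$ accordingly), so $\B$ processes all of $w$.

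\medskip

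\textbf{From a winning strategy to a $P$-parity DPDA.} This is the substantive direction. By the cited results on pushdown games \cite{Walukiewicz01}, if Classifier has a winning strategy then Classifier has a winning strategy realized by a pushdown transducer: a deterministic pushdown machine whose input is the sequence of moves of Automaton and whose output, after each move, is the priority chosen by Classifier. I would then build the desired $P$-parity DPDA $\B$ essentially as the product of $\A$ with this strategy transducer, where the priority of a state of $\B$ is the priority most recently output by the strategy component. One has to be careful that the product of two pushdown machines is again a pushdown machine: this works because $\A$ drives the strategy transducer, so the two stacks grow and shrink in lockstep with the input, and the two stack symbols can be paired into a single stack alphabet — this is where determinism of $\A$ and the fact that the transducer reads exactly the moves of $\A$ are essential. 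The run of $\B$ on an input $w \in \alphabet^\w$ then consists of the run of $\A$ on $w$ together with the priorities that Classifier's strategy produces against the Automaton-play that follows $\A$'s run on $w$. Since the strategy is winning, the Classifier priority sequence is accepting iff the $\A$-priority sequence is; the former is, by construction, the parity condition of $\B$, and the latter decides membership of $w$ in $L_\omega(\A)$. Hence $L_\omega(\B) = L_\omega(\A)$ and $\B$ uses only priorities from $P$.

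\medskip

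\textbf{Main obstacle.} The delicate point is the passage from ``Classifier wins'' to ``Classifier wins via a pushdown strategy,'' i.e.\ the existence of pushdown-transducer-definable winning strategies; this is exactly what the results of \cite{Walukiewicz01} on solving parity games over pushdown graphs provide, but one must ensure the classification game really is such a game — its arena is the configuration graph of $\A$ augmented by $|P|$ choice-vertices for Classifier, which is still a pushdown graph, and its winning condition ``both priority sequences accepting or both rejecting'' must be expressed as a parity condition on this arena. The latter requires a standard but non-trivial product construction: the acceptance condition compares two parity sequences, which is not itself a parity condition, so one runs it through a deterministic parity automaton recognizing the (\,$\omega$-regular\,) set of pairs of priority sequences that are ``in agreement,'' and takes the product of this automaton with the game graph, inheriting a genuine parity winning condition. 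Once this reformulation is in place, the two simulations above are routine.
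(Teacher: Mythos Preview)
Your argument is correct and matches the paper's approach: simulate $\B$ to give Classifier a winning strategy, and for the converse invoke the pushdown-strategy theorem of \cite{Walukiewicz01} and convert the strategy transducer into a $P$-parity DPDA --- the extra details you spell out (rephrasing the ``agreement'' winning condition as a parity condition via a product with a finite deterministic parity automaton, and merging the two synchronized stacks) are exactly what the paper leaves implicit. One small slip worth flagging: in the first direction the simulated stack of $\B$ is \emph{not} ``part of the pushdown store of the game graph'' (the game arena carries $\A$'s stack, and $\B$'s stack behaviour may be entirely unrelated), but this is harmless, since for that direction you only need \emph{some} winning strategy for Classifier, not a pushdown one.
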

For the proof it suffices to observe the following things. If there is
a $P$-parity DPDA $\B$ accepting $L_\omega(\A)$, then Classifier can
simulate the run of $\B$ on the inputs played by Automaton, and always
choose the priority of the current state of $\B$. This obviously
defines a winning strategy because $\A$ and $\B$ accept the same
language. For the other direction one uses the fact that a winning
strategy for Classifier can be implemented by a pushdown automaton
that reads the moves of Automaton and outputs the moves of Classifier
\cite{Walukiewicz01,Fridman10}. This pushdown automaton for the
strategy can easily be converted into $P$-parity DPDA for $L_\omega(\A)$.

For a given parity DPDA there are only finitely many sets $P$ with
less priorities than $\A$ uses. Since it is decidable which player has
a winning strategy in the classification game \cite{Walukiewicz01}, we
obtain an algorithm for solving the parity index problem for DPDAs.
\begin{theorem}
There is an algorithm solving the parity index problem for parity
DPDAs.
\end{theorem}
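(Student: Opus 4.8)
The plan is to turn the two informal sentences preceding the statement into an actual algorithm, by combining the classification game Lemma with the decidability of pushdown parity games, after a preliminary normalization that makes the search space finite.

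\textbf{Step 1: reduce to finitely many candidate priority sets.} The parity acceptance of a run depends only on the relative order of the priorities occurring infinitely often and on their parities. So, given any parity DPDA $\B$ with priority set $S$ of size $k$, one may first merge consecutive priorities of equal parity — if two priorities that are consecutive in the sorted set $S$ have the same parity, replace every occurrence of the smaller one by the larger; a short case analysis on $\limsup$ shows this does not change $L_\omega(\B)$ — obtaining a priority set whose values strictly alternate in parity, and then relabel it monotonically onto a contiguous interval $[i,j]$ with $i\in\{0,1\}$, $j-i+1\le k$. Hence $L_\omega(\A)$ is accepted by some parity DPDA with at most $m$ priorities if, and only if, it is accepted by a $P$-parity DPDA for one of the two sets $P\in\{[0,m-1],[1,m]\}$. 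Writing $n$ for the number of priorities used by the input automaton $\A$, it thus suffices to consider the $2n$ candidate sets arising for $m=1,\dots,n$.

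\textbf{Step 2: decide each candidate via the classification game.} For a fixed candidate $P$, present the classification game for $\A$ and $P$ as a game on a pushdown graph: positions are configurations of $\A$ together with a bounded amount of extra information (a turn flag and, at Classifier's positions, the priority last chosen from $P$), Automaton's moves are the transitions of $\A$, and Classifier's moves range over the finitely many priorities in $P$. The winning condition — that the two induced priority sequences, the one read off $\A$'s configurations and the one produced by Classifier, are either both accepting or both rejecting — is a Boolean combination of two parity conditions, hence $\omega$-regular; taking the synchronous product of the pushdown graph with a deterministic parity automaton recognizing this condition, and folding that automaton's finite state space into the finite control, turns it into a genuine parity pushdown game. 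By \cite{Walukiewicz01} it is decidable which player wins such a game, and by the Lemma Classifier wins exactly when $L_\omega(\A)$ is recognized by a $P$-parity DPDA. The algorithm then runs this test on the $2n$ candidates and returns the least $m$ for which Classifier wins on $[0,m-1]$ or on $[1,m]$; this is well defined and at most $n$, because Classifier trivially wins for the priority set actually used by $\A$ (by echoing $\A$'s own priorities), and by Step~1 no smaller number of priorities can suffice.

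\textbf{Main obstacle.} The normalization of Step~1 and the bookkeeping of Step~2 (checking that the classification game really is a pushdown game and that its bi-implication winning condition can be massaged into a parity condition on the pushdown graph) are routine. All the genuine content is imported: the characterization of $P$-parity recognizability by the classification game — the Lemma, which itself rests on the synthesis of pushdown strategies \cite{Walukiewicz01,Fridman10} — and the decidability of winner determination in pushdown parity games \cite{Walukiewicz01}. So the step carrying the real weight, were one not permitted to cite it, is the solution of pushdown parity games; everything surrounding it is a finite reduction.
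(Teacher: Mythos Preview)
Your proposal is correct and follows the paper's approach exactly: enumerate the finitely many candidate priority sets (up to the standard normalization you spell out in Step~1) and decide the classification game for each via \cite{Walukiewicz01}, invoking the Lemma to conclude. The paper's own argument is just the two terse sentences preceding the theorem, so your write-up merely makes explicit the finiteness reduction and the encoding of the bi-implication winning condition as a parity pushdown game that the paper leaves implicit.
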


\subsection*{Stair Parity Index}
We now turn to the stair parity index problem for stair parity
DVPAs. In fact, it is possible to use the same game-based approach
because pushdown games with stair conditions can be solved
algorithmically \cite{LMS2004VPG}. However, for stair parity VPAs one
can also adapt the much simpler solution for computing the parity
index of finite parity automata. Note that in the example from
Figure~\ref{fig:dvpa-parity-example} the ``critical'' state $q_1$ can
never occur on a step (moving out of $q_1$ requires to read a return
and thus to pop a symbol). Thus, the priority of $q_1$ is not
important in a stair parity acceptance condition. It turns out that
this is not a coincidence. The result presented below has been
obtained in collaboration with Philipp Stephan, see \cite{S2006DA}.

Consider the transformation graph of a stair parity DVPA $\A$ defined
as follows. The vertices are the states of $\A$. An edge from $q_1$ to
$q_2$ indicates that $q_1$ and $q_2$ can occur on successive steps in
a run of $\A$. An input connecting two successive steps of a run is
either an internal symbol or a minimally well-matched word. Therefore,
this transformation graph can be computed inductively based on the
definition of well-matched words from
Section~\ref{sec:preliminaries}. One starts with the graph containing
only the edges for the internal symbols. In each iteration one
computes the transitive closure of the current graph. Denote this
transitive closure by $T$. Then one checks whether there are
transitions $\delta(q,c) = (q',Z)$ and $\delta(p',r,Z) = p$ for a call
$c$, a return $r$, and a stack symbol $Z$, such that $(q',p') \in
T$. In this case we add the edge $(q,p)$ to the graph. We repeat this
procedure until no more edges are added.

The paths through the transformation graph correspond to the possible
sequences of states on steps in runs of $\A$. We now use the algorithm
from \cite{Carton99} to compute the minimal number of priorities
required on this transformation graph, simply by viewing it as the
transition graph of a finite state deterministic parity automaton. The
resulting assignment of priorities is then also minimal for the stair
parity DVPA $\A$.

\begin{theorem}
The stair parity index problem for stair parity DVPAs can be solved in
polynomial time.
\end{theorem}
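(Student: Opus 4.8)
The plan is to reduce the stair parity index problem for a stair parity DVPA $\A$ with state set $Q$ and priority function $\Omega$ to the (polynomial-time) parity index problem for finite deterministic parity automata, via the transformation graph $T_\A$ described above.

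First I would pin down the exact correspondence between runs of $\A$ and paths in $T_\A$. Since the stack operation is fixed by the input symbol, the set of step positions of a run depends only on the input word, and one shows that the factor of the input occurring between two consecutive steps is always a single internal symbol, a single call symbol that is never popped afterwards, or a minimally well-matched word; conversely, every such factor is realised by some run. This yields the fixpoint computation of $T_\A$ (augmented, for the single-call corner case, by the edges $q\to q'$ for call transitions $\delta(q,c)=(q',Z)$): start from the internal and single-call edges, and repeatedly take the transitive closure $T$ and add the edge $(q,p)$ whenever $\delta(q,c)=(q',Z)$, $\delta(p',r,Z)=p$ and $(q',p')\in T$. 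Each round adds an edge or halts, and there are at most $|Q|^2$ edges, so this is polynomial; and the outcome is that the sequences of states occurring on the steps of runs of $\A$ are precisely the infinite paths of $T_\A$.

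Next, soundness of reassignment. Viewing $T_\A$ as a finite graph labelled by $\Omega$, I would apply the algorithm of \cite{Carton99} to obtain a priority function $\Omega'$ on $Q$ that uses the least possible number of priorities while, on every infinite path of $T_\A$, the highest priority seen infinitely often keeps the same parity under $\Omega$ and under $\Omega'$. Then the stair parity DVPA $\A'$ obtained from $\A$ by replacing $\Omega$ with $\Omega'$ recognises $L_\omega(\A)$: for every input $w$ the runs of $\A$ and $\A'$ coincide, hence so do their steps, and the sequence of states on those steps is an infinite path of $T_\A$, which is accepting for $\Omega$ iff it is accepting for $\Omega'$. Hence the stair parity index of $L_\omega(\A)$ is at most the number of priorities used by $\Omega'$, and both $T_\A$ and $\Omega'$ are computed in polynomial time.

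The remaining and main part is the matching lower bound: any stair parity DVPA $\B$ with $L_\omega(\B)=L_\omega(\A)$ uses at least as many priorities as $\Omega'$. Here I would use the combinatorial characterisation underlying \cite{Carton99}: the minimal number of priorities for $T_\A$ is witnessed by a nested chain of loops $C_0\subseteq C_1\subseteq\cdots\subseteq C_{k-1}$ inside one strongly connected component of $T_\A$ whose maximal priorities alternate in parity. Each $C_j$ is realised by a finite input that leads $\A$ from some state back to itself with its step-states tracing $C_j$, and by suitably nesting these inputs one obtains, for each $j$, an $\omega$-input $w_j$ on which the highest step-priority of $\A$ seen infinitely often equals the maximum of $C_j$, so that $\A$ accepts $w_j$ with parity alternating in $j$. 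Running $\B$ on the $w_j$ and pumping — using that a factor between consecutive steps is either a single transition or a well-matched word, and a well-matched word restores $\B$'s stack height and so acts as an ordinary transition on $\B$'s finite state set — produces a chain of the same length $k$ in $\B$'s transformation graph; as $L_\omega(\B)=L_\omega(\A)$, the parities of the chain must match, so $\B$ needs at least $k$ priorities. I expect this transfer of the chain to be the hard part: one has to make the pumping precise when the $w_j$ are built with unmatched calls so that $\B$'s stack grows without bound, and check that the alternation of maximal priorities survives the passage to $T_\B$. Granting this, the output of \cite{Carton99} on $T_\A$ is exactly the stair parity index of $L_\omega(\A)$, and the overall algorithm is polynomial.
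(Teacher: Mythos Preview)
Your approach is the paper's: build the transformation graph, run Carton's algorithm on it, and declare the result optimal. You actually go further than the paper, which simply asserts that ``the resulting assignment of priorities is then also minimal for the stair parity DVPA $\A$'' without arguing the lower bound at all; and you correctly add the edges $q\to q'$ coming from unmatched calls, a case the paper's description omits when it says successive steps are connected only by internals or minimally well-matched words.

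Your stated worry about the lower bound, however, dissolves once you push the observation you already made one step further. Every inter-step factor---internal, unmatched call, or minimally well-matched word---induces a fixed function $Q_\B\to Q_\B$ on the states of \emph{any} DVPA $\B$, independently of the current stack: calls and internals do not read the stack, and a well-matched word restores it. Thus, over the (countable) alphabet of inter-step factors, both $\A$ and $\B$ become genuine finite deterministic parity automata whose transition graphs are exactly $T_\A$ and $T_\B$; and since the step positions depend only on the input, these two automata recognise the same $\omega$-language whenever $L_\omega(\A)=L_\omega(\B)$. The Wagner--Carton theory then gives them the same parity index. The unbounded growth of $\B$'s stack along unmatched calls is irrelevant---only $\B$'s state on steps matters, and that evolves by a fixed map per inter-step factor regardless of stack height---so no pumping against the stack is needed and the chain transfers directly.
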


\newpage
\section{Removing the Stair Condition} \label{sec:removing-stair}
The goal is to decide for a given stair parity DVPA whether there is an
equivalent parity DVPA and to construct one if it exists. We show
how to decide this problem in general for stair B\"uchi DVPAs. We
comment on the full class of stair parity DVPAs at the end of this
section.

In Section~\ref{sec:preliminaries} we described the language $\lsu$ of
strictly unbounded words over $\valph{\{c\}}{\emptyset}{\{r\}}$,
containing all words with an infinite number of unmatched calls. This
language can be accepted by a stair B\"uchi DVPA but not by a parity
DVPA \cite{AM2004VPL}.  We show that a language $L$ accepted by a
stair B\"uchi DVPA can
\begin{itemize}
\item either be accepted by a parity DVPA, or 
\item $L$ is at least as complex as $\lsu$.
\end{itemize}
To formalize the notion of ``as complex as $\lsu$'', we need to
introduce some terminology and results concerning the topological
complexity of $\omega$-languages.

We can view $\alphabet^\omega$ as a topological space by equipping it
with the Cantor topology, where the open sets are those of the form $L
\alphabet^\omega$ for $L \subseteq \alphabet^*$. Starting from the open sets
one defines the finite Borel hierarchy as a sequence $\BSigma_1,
\BPi_1, \BSigma_2, \BPi_2, \ldots$ of classes of $\omega$-languages as
follows (we omit the finite and only refer to this hierarchy as Borel
hierarchy in the following):
\begin{itemize}
\item $\BSigma_1$ consists of the open sets.
\item $\BPi_i$ consists of the complements of the languages in
  $\BSigma_i$.
\item $\BSigma_{i+1}$ consists of countable unions of  languages in
  $\BPi_i$. 
\end{itemize}
If we denote by $B(\BSigma_i)$ the closure of $\BSigma_i$ under finite
Boolean combinations, then we obtain the following relation between
the classes of the Borel hierarchy, where an arrow indicates strict
inclusion of the corresponding classes:
\begin{center}
\begin{tikzpicture}[node distance=2cm]
\path node(s1){$\BSigma_1$} 
  -- +(0,-1)  node(p1){$\BPi_1$}
  -- +(2,-.5) node(b1){$B(\BSigma_1)$}
  -- ++(4,0)  node(s2){$\BSigma_2$}
  -- +(0,-1)  node(p2){$\BPi_2$}
  -- +(2,-.5) node(b2){$B(\BSigma_2)$}
  -- ++(4,0)  node(s3){$\BSigma_3$}
  -- +(0,-1)  node(p3){$\BPi_3$}
  -- +(2,-.5) node(b3){$B(\BSigma_3)$}
  -- ++(4,0)  node(s4){}
  -- +(0,-1)  node(p4){}
  -- +(1,-.5) node(b4){$\cdots$}
;
\path[->] (s1) edge (b1)
          (p1) edge (b1)
          (b1) edge (s2) edge (p2)
          (s2) edge (b2)
          (p2) edge (b2)
          (b2) edge (s3) edge (p3)
          (s3) edge (b3)
          (p3) edge (b3)
          (b3) edge (s4) edge (p4)
 ;
\end{tikzpicture}
\end{center}
The above statement of a language $L$ being at least as complex as
$\lsu$ refers to the topological complexity. It is known that
languages accepted by deterministic automata (independent of the
specific automaton model) with a parity condition are included in
$B(\BSigma_2)$, and in \cite{LMS2004VPG} it is shown that languages
accepted by stair parity DVPAS are in $B(\BSigma_3)$.  Furthermore, it
is known that $\lsu$ is a true $\BSigma_3$-set (it is complete for
$\BSigma_3$ for the reduction notion introduced below)
\cite{CDT2002a}. In particular, it is not contained in $B(\BSigma_2)$.

In our decidability proof we show that specific patterns in a stair
parity DVPA induce a high topological complexity of the accepted
language (namely being at least as complex as $\lsu$). On the other
hand side, the absence of these patterns allows for the construction
of an equivalent parity DVPA.

Before we introduce these patterns, we define the reducibility
notion. Originally, it is defined using continuous functions. For our
purposes it is easier to work with a different definition based on the
Wadge game \cite{Wadge84} (see also \cite{CDT2002a}). 

Consider two alphabets $\alphabet_1,\alphabet_2$ and let $L_1
\subseteq \alphabet_1^\omega$ and $L_2 \subseteq
\alphabet_2^\omega$. The Wadge game $W(L_1,L_2)$ is played between
Players I and II as follows. In each round Player I plays an element
of $\alphabet_1$ and Player II replies with a finite word from
$\alphabet_2^*$ (the empty word is also possible). In the limit,
Player I plays an infinite word $x$ over $\alphabet_1$, and Player II
a finite or infinite word $y$ over $\alphabet_2$. Player II wins if
$y$ is infinite and $x \in L_1$ iff $y \in L_2$.

We write $L_1 \wred L_2$ if Player II has a winning strategy in
$W(L_1,L_2)$. The following theorem is a consequence of basic
properties of $\wred$.
\begin{theorem}[\cite{Wadge84}]\label{the:wadge}
If $L_1 \wred L_2$, then each class of the Borel hierarchy that
contains $L_2$ also contains $L_1$. 
\end{theorem}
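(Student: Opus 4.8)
The plan is to prove Theorem~\ref{the:wadge} directly from the definition of $\wred$ via the Wadge game, exploiting the fact that a winning strategy for Player~II is essentially a continuous reduction. First I would observe that a winning strategy $\tau$ for Player~II in $W(L_1,L_2)$ induces a function $f \colon \alphabet_1^\omega \to \alphabet_2^\omega$: given $x \in \alphabet_1^\omega$, let Player~I play $x$ letter by letter and let Player~II answer according to $\tau$; since $\tau$ is winning, the resulting word $y$ is infinite, so we may set $f(x) = y$. The winning condition gives exactly $x \in L_1 \iff f(x) \in L_2$, i.e. $L_1 = f^{-1}(L_2)$. Moreover $f$ is continuous (in the Cantor topology): the first $n$ letters of $f(x)$ depend only on finitely many letters of $x$, because $\tau$ has produced them after Player~I has made finitely many moves. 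Hence for any set $C \subseteq \alphabet_2^\omega$, $f^{-1}(C)$ has the same "descriptive complexity" as $C$.

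Next I would make the last sentence precise by an induction on the Borel hierarchy. The key closure facts are: continuous preimages commute with complementation, $f^{-1}(\alphabet_2^\omega \setminus C) = \alphabet_1^\omega \setminus f^{-1}(C)$, and with countable unions, $f^{-1}\!\left(\bigcup_n C_n\right) = \bigcup_n f^{-1}(C_n)$; and the base case is that the preimage of an open set under a continuous map is open, so $f^{-1}(\BSigma_1) \subseteq \BSigma_1$. Running these through the inductive definition of $\BSigma_i$ and $\BPi_i$ given in the excerpt yields $f^{-1}(\BSigma_i) \subseteq \BSigma_i$ and $f^{-1}(\BPi_i) \subseteq \BPi_i$ for every $i$, and since finite Boolean combinations are also preserved, $f^{-1}(B(\BSigma_i)) \subseteq B(\BSigma_i)$. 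Applying this to $C = L_2$: if $L_2$ lies in a class $\mathcal{C}$ of the hierarchy, then $L_1 = f^{-1}(L_2)$ lies in $\mathcal{C}$ as well, which is the claim.

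The main obstacle, such as it is, is the verification that the $f$ extracted from $\tau$ is genuinely total and continuous, rather than a subtlety in the Borel-hierarchy induction. Totality is where the hypothesis that $\tau$ is \emph{winning} is used essentially — without it, Player~II could stall forever and $f(x)$ would be a finite word, leaving $f$ undefined. Continuity is the routine "finite information" argument but should be stated carefully: after Player~I's $n$-th move, $\tau$ has committed to a prefix of $y$ that only lengthens as the play continues, and the length of that committed prefix tends to infinity precisely because $\tau$ is winning (so $y$ is infinite); thus every finite prefix of $f(x)$ is determined by a finite prefix of $x$. Once $f$ is in hand, everything else is the bookkeeping of the induction over $\BSigma_i,\BPi_i,B(\BSigma_i)$, which is standard and which I would present compactly rather than spelling out each level.
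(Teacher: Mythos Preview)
Your argument is correct and is the standard proof that a winning strategy for Player~II in $W(L_1,L_2)$ yields a continuous reduction $f$ with $L_1 = f^{-1}(L_2)$, after which closure of each Borel class under continuous preimage follows by the obvious induction. There is nothing to compare against, however: the paper does not prove Theorem~\ref{the:wadge} at all but simply cites it from \cite{Wadge84}, stating only that it ``is a consequence of basic properties of $\wred$''. Your write-up supplies precisely those basic properties, so it is strictly more than what the paper offers.
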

We use the following consequence of Theorem~\ref{the:wadge} and the
properties of $\lsu$. 
\begin{lemma}\label{lem:lsu-reduction}
If $\lsu  \wred L$, then $L$ cannot be accepted by a parity DVPA.
\end{lemma}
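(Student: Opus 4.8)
The plan is to derive the statement as a direct consequence of Theorem~\ref{the:wadge} together with the two established facts about $\lsu$: that $\lsu$ is not contained in $B(\BSigma_2)$ (being a true $\BSigma_3$-set, per~\cite{CDT2002a}), and that every language accepted by a parity DVPA lies in $B(\BSigma_2)$ (since such languages are recognized by a deterministic parity automaton model, and all such languages sit in $B(\BSigma_2)$). First I would assume, for contradiction, that $\lsu \wred L$ and that $L$ is accepted by some parity DVPA $\A$. Then $L = L_\omega(\A) \in B(\BSigma_2)$.

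Next I would invoke Theorem~\ref{the:wadge}: since $B(\BSigma_2)$ is one of the classes of the Borel hierarchy appearing in the diagram and it contains $L$, the reduction $\lsu \wred L$ forces $\lsu \in B(\BSigma_2)$ as well. This contradicts the known fact that $\lsu$ is a true $\BSigma_3$-set and in particular not contained in $B(\BSigma_2)$. Hence no parity DVPA can accept $L$, which is exactly the claim.

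There is essentially no obstacle here beyond making sure the cited ingredients are applied with the right scope: one should note explicitly that $B(\BSigma_2)$ is closed under the Wadge reduction in the precise sense of Theorem~\ref{the:wadge} (it is stated there for every class of the Borel hierarchy, and $B(\BSigma_2)$ is such a class), and that the inclusion "parity DVPA languages $\subseteq B(\BSigma_2)$" is the one recalled in the paragraph preceding Lemma~\ref{lem:lsu-reduction}. The only mild subtlety is that the relevant topological upper bound is for the language $L$ itself and not for $\lsu$, so the direction of the reduction $\lsu \wred L$ (Player II for $\lsu$ into $L$) is what transports the complexity \emph{upward from} $\lsu$; since $\lsu$ is already too complex for $B(\BSigma_2)$, $L$ must be as well.

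In short: if $L$ were accepted by a parity DVPA, then $L \in B(\BSigma_2)$; by Theorem~\ref{the:wadge} and $\lsu \wred L$ we would get $\lsu \in B(\BSigma_2)$, contradicting that $\lsu$ is a true $\BSigma_3$-set. Therefore $L$ is not accepted by any parity DVPA.
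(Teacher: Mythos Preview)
Your proposal is correct and follows essentially the same argument as the paper: both combine the fact that parity DVPA (in the paper even parity DPDA) languages lie in $B(\BSigma_2)$ with the fact that $\lsu \notin B(\BSigma_2)$, and then invoke Theorem~\ref{the:wadge}. The only difference is that the paper also sketches, for completeness, why parity DPDA languages are in $B(\BSigma_2)$ (via a Wadge reduction to the priority-sequence language $L_P$), whereas you simply cite this as stated in the preceding paragraph.
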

\begin{proof}
As mentioned above, the languages that can be accepted by parity DPDAs
are contained in $B(\BSigma_2)$. We sketch the proof of this folklore
result for completeness: We apply Theorem~\ref{the:wadge} using the
following argument. Let $\A$ be a parity DPDA and let $P$ be the set
of priorities used by $\A$. Let $L_P \subseteq P^\omega$ be the
sequences of priorities that satisfy the parity condition. Then
$L_\omega(\A) \wred L_P$ because in the Wadge game Player~II can
simply keep track of the run of $\A$ on the word played by Player~I,
and play the corresponding priorities of the states of $\A$. Then
clearly the word played by I is in $L_\omega(\A)$ iff the priority
sequence of II satisfies the parity condition. Now, $L_P$ is easily
seen to be a Boolean combination of $\BSigma_2$-sets.

Since $\lsu$ is not contained in $B(\BSigma_2)$ \cite{CDT2002a}, we
conclude from Theorem~\ref{the:wadge} that $\lsu  \wred L$ implies
that $L$ cannot be accepted by a parity DVPA.
\end{proof}

\paragraph{Forbidden patterns.}
Fix a stair B\"uchi DVPA $\A = (Q,\alphabet, \Gamma, \qin, \delta, F)$
and let $L = L_\omega(\A)$. Recall that $L$ does not contain words with
unmatched returns.  We assume that all states of $\A$ are reachable.

For an input word $u$, states $q,q'$, and stack contents $\sigma,
\sigma'$ we write $(q,\sigma) \xrightarrow{u} (q',\sigma')$ if there
is a run for the input $u$ from $(q,\sigma)$ to $(q',\sigma')$. The
notation $(q,\sigma) \xrightarrow[F]{u} (q',\sigma')$ means that at
least one state from $F$ occurs on a step in this run (for steps to be
defined we assume that all prefixes of $u$ are of non-negative stack
height). Dual to that we write $(q,\sigma) \xrightarrow[\notin F]{u}
(q',\sigma')$ to indicate that no state from $F$ occurs on a step in
this run. If we omit the input word $u$ then this means that there
exists some input word.

It is not difficult to see that $\lsu \wred L$ if there are words $u$
and $u'$, a stack content $\sigma$, and a state $q \in Q \setminus F$
such that
\[
(q,\bot) \xrightarrow[F]{u} (q,\sigma) \xrightarrow{u'}
(q,\bot)
\]
and no final state occurs on steps in this run (in a run that starts
and ends in the empty stack, the steps are the configurations with
empty stack). To prove $\lsu \wred L$, the corresponding winning
strategy for  Player~II in the Wadge game is: $c \mapsto u$ and $r
\mapsto u'$.

Unfortunately, the above condition is not necessary for $\lsu \wred
L$. Consider the stair B\"uchi DVPA $\A$ shown in
Figure~\ref{fig:DVPA-with-forbidden-pattern} with one call symbol $c$
and two return symbols $r_1,r_2$ (the initial state does not matter).
In this automaton the simple pattern described above cannot occur
because the only non-final states are $q$ and $q'$. For these two
states, words $u$ and $u'$ as required in the pattern cannot exist for
the following reasons:
\begin{itemize}
\item The state $q$ can only be reached via calls and therefore
  $(q,\bot)$ is not reachable from $(q,\bot)$.
\item From $q'$ the symbol $Z'$ is pushed onto the stack. But $q'$ can
  only be reached on popping $Z$. Therefore $(q',\bot)$ is not
  reachable from $(q',\bot)$.
\end{itemize}
\begin{figure}
\begin{center}
\begin{tikzpicture}[node distance=2cm,initial text=,scale=.5]
\tikzstyle{every state}=[inner sep=1pt,minimum size=8mm]
\node[state](q) {$q$};
\node[state,accepting](q'') [above right of=q]{$q''$};
\node[state](q') [below right of=q'']{$q'$};
\path[->]
          (q) edge[bend left] node[left]{$\scriptstyle c/Z$} (q'')
          (q'') edge[bend left] node[right]{$\scriptstyle c/Z$} (q)
               edge[loop above] node[above]{$\scriptstyle r_1/Z,r_1/Z'$} ()
               edge node[right]{$\scriptstyle r_2/Z$} (q')
          (q') edge node[below]{$\scriptstyle c/Z'$} (q)
;
\end{tikzpicture}
\end{center}
\caption{A stair B\"uchi DVPA illustrating the definition of forbidden
pattern}\label{fig:DVPA-with-forbidden-pattern}
\end{figure}
However, the example automaton $\A$ contains an extended pattern that 
guarantees that $\lsu \wred L_\omega(\A)$, as defined below and
illustrated in Figure~\ref{fig:forbidden}.
\begin{figure}
\begin{center}
\begin{tikzpicture}
\path
  node(1){$q$}
  -- ++(1.5,1) node(2){$q$}  +(0,-0.7) node[stack]{$\sigma$}
  -- ++(1.5,0) node(3){$q'$}  +(0,-0.7) node[stack]{$\sigma$}
  -- ++(1.5,1) node(4){$q$}  +(0,-1.7) node[stack]{$\sigma$}  
                             +(0,-0.7) node[stack]{$\sigma'$}
  -- ++(1.5,0) node(5){$q''$}  +(0,-1.7) node[stack]{$\sigma$}
                             +(0,-0.7) node[stack]{$\sigma'$}
  -- ++(1.5,-1) node(6){$q''$}  +(0,-0.7) node[stack]{$\sigma$}
  -- ++(1.5,-1) node(7){$q'$}
;
\path[->] 
(1) edge node[above left]{$u$} node[xshift=4,yshift=-4]{$\scriptstyle F$} (2)
 (2) edge[bend left] node[above]{$v$} node[below]{$\scriptstyle \notin F$} (3)
 (3) edge node[above left]{$w$} node[xshift=5,yshift=-5]{$\scriptstyle
  \notin F$} (4)
 (4) edge[bend left] node[above]{$x$} (5)
 (5) edge node[above right]{$y$} (6)
 (6) edge node[above right]{$z$} (7)
;
\end{tikzpicture}
\end{center}
\caption{Forbidden pattern}\label{fig:forbidden}
\end{figure}

Formally, we call $q,q' \in Q \setminus F$, $q'' \in Q$, $u,v,w,x,y,z
\in \alphabet^*$, and $\sigma, \sigma' \in \Gamma^*$ a forbidden pattern
of $\A$ if $uvwxyz \in \lmwm$ and
\[
\begin{array}{lll}
(q,\bot) \xrightarrow[F]{u} (q,\sigma), &
  (q,\bot) \xrightarrow[\notin F]{v} (q',\bot), & 
  (q',\bot) \xrightarrow[\notin F]{w} (q,\sigma'), 
\\ 
(q,\bot) \xrightarrow{x} (q'',\bot), &
(q'',\sigma') \xrightarrow{y} (q'', \bot), & 
(q'',\sigma) \xrightarrow{z} (q',\bot).
\end{array}
\]
Note that $\sigma'$ might be empty. Since $q$ is a non-final
state, and we require that a final state is seen on a step on the path
from $q$ to $q$, the stack content $\sigma$ cannot be empty. Further
note that this pattern subsumes the first simple pattern:  choose
$q=q'=q''$, $v=w=x=y=\bot$, and $u' = z$.

The example automaton from
Figure~\ref{fig:DVPA-with-forbidden-pattern} contains such a pattern
for $q,q',q''$. the words $u=cc$, $v=cr_2$, $w=c$, $x=cr_1$,
$y=r_1$, $z=r_1r_2$, and the stack contents $\sigma = ZZ$, $\sigma' =
Z'$. 

\begin{lemma} \label{lem:forbidden-pattern}
If $\A$ has a forbidden pattern, then $\lsu \wred L_\omega(\A)$.
\end{lemma}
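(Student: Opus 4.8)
The plan is to prove $\lsu \wred L_\omega(\A)$ directly by exhibiting a winning strategy for Player~II in the Wadge game $W(\lsu, L_\omega(\A))$. Player~I plays a word $x$ over $\{c,r\}$, and the idea is that Player~II answers each letter by a block of $\A$-input symbols and simulates the induced run of $\A$, maintaining the invariant that the simulated configuration always sits at one of the pattern states $q, q', q''$ over a stack built from copies of $\sigma$ (and, transiently, $\sigma'$). A call $c$ of Player~I is answered by the block $u$, which pushes one copy of $\sigma$ and, by the segment $(q,\bot)\xrightarrow[F]{u}(q,\sigma)$, forces a final state onto a (local) step; a return $r$ is answered by a block that pops the topmost $\sigma$ again, routed through $q'$ and $q''$ with the help of the auxiliary segments $w$, $x$, $y$, $z$ and the auxiliary content $\sigma'$.

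First I would fix this block translation and check that the simulation is well defined at every nesting depth. The visibly pushdown discipline guarantees that each pattern segment, read over an arbitrary stack suffix $V$, merely appends or removes the stated content and leaves $V$ untouched; this is exactly what makes the segments composable independently of the depth. Concretely, reading $u$ turns $(q,V)$ into $(q,\sigma V)$, while the closing block assembled from $x$ and $z$ (preceded, when the resting control state is $q'$, by the resynchronizing segment $w$, and using $y$ to discard the transient $\sigma'$) brings the top of the stack from $\sigma V$ back to $V$ and cycles the control state through $q''$. I would track, in a private counter mirroring the matching depth of $x$, which $\sigma$ each return must close, and answer any stray return on the empty stack (which already forces $x\notin\lsu$) by a fixed non-final block, keeping the output infinite and the run non-accepting.

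The heart of the correctness proof is a single equivalence: \emph{a call of Player~I stays unmatched in $x$ if and only if the final state produced inside its $u$-block lies on a step of the whole simulated run.} For the forward direction, if the call is never matched then its copy of $\sigma$ is never popped, so the stack never again descends below the height at which the $u$-block placed that final state; since inside $u$ this state already lies on a local step, it survives as a genuine step of the full run. For the backward direction I would use that $uvwxyz\in\lmwm$ and is stack-neutral, so whenever a call is matched by a return, the corresponding block returns the stack to the base height at which it started; every final state produced inside it then lies strictly above a later configuration of equal-or-smaller height and is therefore buried, hence not a step. Here it is essential that $q,q'\in Q\setminus F$, so that the endpoints of the blocks---the only configurations that survive as steps at the base heights---carry no spurious final states. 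Since the step positions of Player~II's run depend only on the call/return profile of the blocks, which mirrors $x$, the run visits $F$ on infinitely many steps iff $x$ has infinitely many unmatched calls, i.e.\ iff $x\in\lsu$. Thus $x\in\lsu \Leftrightarrow y\in L_\omega(\A)$, and as every block is nonempty the answer $y$ is infinite, so Player~II wins.

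The main obstacle I expect is exactly the bookkeeping behind the second and third steps: making the control flow through $q,q',q''$ and the transient symbol $\sigma'$ compose consistently across arbitrarily nested matched pairs, so that each return is answered from the state its matching call left behind, and then proving the step-burial claim cleanly---that a matched pair leaves no permanent final step whereas an unmatched call leaves exactly one. Everything else (the independence of step positions from the stack suffix, the infiniteness of $y$, and the reduction of acceptance to ``infinitely many final states on steps'') is routine once these two invariants are in place.
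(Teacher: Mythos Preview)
Your approach is exactly the paper's: exhibit a winning strategy for Player~II in the Wadge game by translating Player~I's calls and returns into blocks assembled from the pattern segments, and argue that unmatched calls correspond bijectively to surviving final states on steps. The correctness argument you outline (burial of final states inside matched blocks, survival for unmatched ones, non-finality of the block endpoints $q,q'$) is the same as the paper's.

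The one point where your plan is underspecified is precisely the obstacle you name. Answering every call by the bare block $u$ does not compose: $u$ must be read from state $q$, but after any return you are in $q'$, and you only invoke the resynchronizing segment $w$ before \emph{return} blocks. The paper fixes this cleanly by normalizing to state $q'$ after \emph{every} move: the call block is $wuv$ (not $u$), so each call pushes $\sigma\sigma'$ and returns to $q'$; the return block is $wxy\,y^{i}z$, where $i$ counts the accumulated $\sigma'$ copies above the next $\sigma$. Accordingly the memory is not a single depth counter but a word in $\{0,1\}^*$ (an abstraction of the stack, $0$ for $\sigma$, $1$ for $\sigma'$), and the invariant becomes: the number of $0$'s equals the number of open calls of Player~I, and the number of final states currently on steps is $k\cdot|\eta|_0$. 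With this normalization the composability and the step-burial claim become immediate.
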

\begin{proof}
We describe a winning strategy $f$ for Player~II in the Wadge
game. The basic idea is to play $u$ whenever Player~I plays $c$, and
to match the last open $u$ with $z$ whenever Player~I plays
$r$. However, after playing $z$, the automaton $\A$ is in state $q'$
(compare Figure~\ref{fig:forbidden}). Hence, to play $u$ again, we
first have to play $w$ to reach $q$, producing a $\sigma'$ on the
stack. Therefore, it can happen that we first have remove these
$\sigma'$ from the stack before we can match the last open $u$ with
$z$. To keep track of this, we use words over $\{0,1\}$ as memory for
$f$ representing an abstraction of the stack of $\A$ ($0$ corresponds
to $\sigma$ and $1$ corresponds to~$\sigma'$).

To simplify the description of $f$, we construct the moves such that
$\A$ is always in $q'$ after reading a finite word generated by
$f$. We also assume that $q'$ is the initial state of $\A$. If this is
not the case, Player~II can simply prepend to the first move a word
leading $\A$ to state $q'$.

Let $\eta \in \{0,1\}^*$ be the current memory content (the initial
content being $\varepsilon$). Then the strategy $f$ works as follows:
\begin{itemize}
\item If Player~I plays $c$, then play $wuv$ and update the memory to
  $01\eta$.
\item If Player~I plays $r$, then let $i \ge 0$ be such that $\eta$ is
  of the form $1^i0\eta'$. In this case, play $wxyy^iz$ and update the
  memory to $\eta'$.
\end{itemize}
Let $|\eta|_0$ denote the number of $0$ occurring in $\eta$ and let
$k$ be the number of final states seen on steps in the run
$(q,\bot) \xrightarrow{u} (q',\sigma)$. Note that $k \ge 1$ by
definition of forbidden pattern. By induction one shows that
\begin{enumerate}
\item after each move of Player~II the number of open calls in the
  word played by Player~I corresponds to $|\eta|_0$,
\item the number of final states seen on steps when $\A$ reads a finite
  word produced by $f$ is  $k \cdot |\eta|_0$.
\end{enumerate}
This implies that $\A$ accepts the infinite word produced by Player~II
according to $f$ iff the infinite word produced by Player~I contains
an unbounded number of unmatched calls. 
\end{proof}

\paragraph{Complexity of state pairs.}
We now show that the absence of forbidden patterns allows to construct
a parity {DVPA} $\A'$ that is equivalent to $\A$.  In order to find an
upper bound on the number of required priorities, we start by defining
a measure for the complexity of pairs of non-final states.  The pair
$(q,q')$ from Figure~\ref{fig:forbidden} would be of infinite
complexity.  If we now replace the states $q$ and $q'$ in the upper
part of Figure~\ref{fig:forbidden} by states $p$ and $p'$, then this
indicates that the possible runs between $q$ and $q'$ are at least as
complex as those between $p$ and $p'$. This situation is shown in
Figure~\ref{fig:order}. Since $q''$ is just an auxiliary state and not
of particular importance, we replaced it by $p''$ to obtain a more
consistent naming scheme.  We show that this relation indeed defines a
strict partial order on pairs of non-final states in the case that
$\A$ does not contain forbidden patterns.

For $p,p',q,q' \in Q \setminus F$ define $(p,p') \Aless (q,q')$ iff
there exists $p'' \in Q$ and stack contents $\sigma, \sigma'$ such
that (see Figure~\ref{fig:order} for an illustration):

\[
\begin{array}{lll}
(q,\bot) \xrightarrow[F]{u} (p,\sigma), &
  (p,\bot) \xrightarrow[\notin F]{} (p',\bot), &
  (p',\bot) \xrightarrow[\notin F]{} (p,\sigma'), \\
(p,\bot) \xrightarrow{} (p'',\bot), &
  (p'',\sigma') \xrightarrow{} (p'', \bot), &
  (p'',\sigma) \xrightarrow{z} (q',\bot),
\end{array}
\]
and $uz \in \lmwm$. The words $v,w,x,y$ from the definition of
forbidden pattern are not made explicit in this definition because we
never need to refer to them. As for forbidden patterns, $\sigma'$
might be empty but $\sigma$ must be non-empty.
\begin{figure}
\begin{center}
\begin{tikzpicture}
\path
  node(1){$q$}
  -- ++(1.5,1) node(2){$p$}  +(0,-0.7) node[stack]{$\sigma$}
  -- ++(1.5,0) node(3){$p'$}  +(0,-0.7) node[stack]{$\sigma$}
  -- ++(1.5,1) node(4){$p$}  +(0,-1.7) node[stack]{$\sigma$}  
                             +(0,-0.7) node[stack]{$\sigma'$}
  -- ++(1.5,0) node(5){$p''$}  +(0,-1.7) node[stack]{$\sigma$}
                             +(0,-0.7) node[stack]{$\sigma'$}
  -- ++(1.5,-1) node(6){$p''$}  +(0,-0.7) node[stack]{$\sigma$}
  -- ++(1.5,-1) node(7){$q'$}
;
\path[->] 
(1) edge node[above left]{$u$} node[xshift=4,yshift=-4]{$\scriptstyle F$} (2)
 (2) edge[bend left] node[above]{} node[below]{$\scriptstyle \notin F$} (3)
 (3) edge node[above left]{} node[xshift=5,yshift=-5]{$\scriptstyle
  \notin F$} (4)
 (4) edge[bend left] node[above]{} (5)
 (5) edge node[above right]{} (6)
 (6) edge node[above right]{$z$} (7)
;
\end{tikzpicture}
\end{center}
\caption{The relation $(p,p') \Aless (q,q')$}\label{fig:order}
\end{figure}
%
%
%
%
\begin{lemma}
If $\A$ does not have a forbidden pattern, then $\Aless$ is a strict
partial order on pairs of states.
\end{lemma}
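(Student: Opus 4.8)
The plan is to verify the two defining properties of a strict partial order — irreflexivity and transitivity — for the relation $\Aless$ on pairs of non-final states, using the assumed absence of forbidden patterns only for irreflexivity. First I would treat irreflexivity: suppose $(q,q') \Aless (q,q')$. Unfolding the definition with $p=q$ and $p'=q'$, we obtain exactly the six runs and the condition $uz \in \lmwm$ from the definition of a forbidden pattern (with the witnessing $p''$ playing the role of $q''$, and $v,w,x,y$ being the words implicit in the $\xrightarrow{}$ arrows). Hence $\A$ would have a forbidden pattern, contradicting the hypothesis. So $(q,q') \not\Aless (q,q')$.

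The substantial part is transitivity: assume $(p,p') \Aless (q,q')$ and $(q,q') \Aless (r,r')$; I must produce a witness for $(p,p') \Aless (r,r')$. From the first relation we have words $u_1, z_1$ with $u_1 z_1 \in \lmwm$, a state $p''$, and stacks $\sigma_1, \sigma_1'$ giving the six runs with $q,q'$ on the outside and $p,p'$ inside. From the second we have $u_2, z_2$ with $u_2 z_2 \in \lmwm$, a state $q''$, and stacks $\sigma_2, \sigma_2'$, with $r,r'$ outside and $q,q'$ inside. The idea is to compose: run from $r$ via $u_2$ to $(q,\sigma_2)$ seeing $F$ on a step, then — since $\sigma_2$ is non-empty — continue on top of $\sigma_2$ using the runs from the first relation to get from $q$ all the way around the $p$–$p'$ loop and back to $q'$, and finally apply $z_2$ to get from $q'$ to $r'$. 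Concretely, the new witnessing "left" run is $r \xrightarrow{u_2} (q,\sigma_2)$ followed by $q \xrightarrow{u_1} (p, \sigma_1)$ lifted to stack suffix $\sigma_2$, so the new top-of-stack content is $\sigma_1\sigma_2$ and we have $(r,\bot) \xrightarrow[F]{u_2 u_1} (p, \sigma_1\sigma_2)$ (the $F$ on a step coming already from the $u_2$ portion, whose steps remain steps of the longer run). The new "right" run is: $(p'',\sigma_1') \xrightarrow{} (p'',\bot)$ then $(p'',\sigma_1) \xrightarrow{z_1}(q',\bot)$ then $(q',\sigma_2) \xrightarrow{z_2} (r',\bot)$ — but here one has to be careful about stack suffixes, since $z_1$ must be run with $\sigma_2$ sitting below it. So the combined closing run from $(p'', \sigma_1 \sigma_2)$ is: use the $\sigma_1$-part of the first pattern's closing data lifted over $\sigma_2$ to reach $(q', \sigma_2)$, then $z_2$ to reach $(r',\bot)$; set the new $z := z_1 z_2$. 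The middle runs ($p \to p'$ and $p' \to p$ with $\notin F$, and the $p''$-loop over $\sigma_1'$) are taken verbatim from the first relation. Finally $u_2 u_1 \cdot z_1 z_2 \in \lmwm$ because $u_1 z_1 \in \lmwm$, $u_2 z_2 \in \lmwm$, and $cwr$-nesting of well-matched words is closed under this kind of insertion; and the new top stack $\sigma_1\sigma_2$ is non-empty since $\sigma_1$ is. This gives the required witness, with $p''$ as the auxiliary state.

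The main obstacle will be bookkeeping of stack suffixes in the composition — ensuring that every run borrowed from the first $\Aless$-relation is correctly "lifted" by placing $\sigma_2$ below it (which is legitimate because runs of a pushdown machine never inspect or remove symbols below the part they touch, and because VPA transitions do not read the top symbol), and confirming that the steps on which $F$ is witnessed in the $u_2$-portion of the left run are still steps of the lengthened run (they are, since the stack only grows afterwards). A secondary point to check carefully is the $\lmwm$-membership of $u_2 u_1 z_1 z_2$: writing $u_2 z_2 = c\, m\, r$ for a call $c$, return $r$, and well-matched $m$, one argues $u_1 z_1 \in \lmwm \subseteq$ well-matched, so $u_1 z_1$ can be spliced where $m$ sits (or concatenated with it), keeping the word well-matched and indeed minimally well-matched under the outermost $c,r$. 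I expect the verification that the $\notin F$ side-conditions survive composition to be routine, since they are only required on the middle runs, which are unchanged.
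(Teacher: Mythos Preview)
Your treatment of irreflexivity is correct and matches the paper: a witness for $(q,q') \Aless (q,q')$ is literally a forbidden pattern.

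The transitivity argument follows the same composition idea as the paper, but as written it contains a genuine gap in the closing run. After lifting $z_1$ over $\sigma_2$ you arrive at $(q',\sigma_2)$, and you then claim that applying $z_2$ yields $(r',\bot)$. But the second relation only provides $(q'',\sigma_2) \xrightarrow{z_2} (r',\bot)$, starting from the auxiliary state $q''$, not from $q'$. Since the automaton is deterministic and state-dependent, you cannot simply apply $z_2$ from $q'$; so the proposed $z := z_1 z_2$ does not witness the required run $(p'',\sigma_1\sigma_2) \xrightarrow{} (r',\bot)$.

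The fix uses precisely those runs of the second relation that your argument ignored. From $(q',\sigma_2)$ apply in order: the implicit word $w_2$ of $(q',\bot) \xrightarrow[\notin F]{} (q,\sigma_2')$ to reach $(q,\sigma_2'\sigma_2)$; then the well-matched word $x_2$ of $(q,\bot) \xrightarrow{} (q'',\bot)$ to reach $(q'',\sigma_2'\sigma_2)$; then $y_2$ of $(q'',\sigma_2') \xrightarrow{} (q'',\bot)$ to reach $(q'',\sigma_2)$; and finally $z_2$ to reach $(r',\bot)$. Hence the correct closing word is $z := z_1\, w_2\, x_2\, y_2\, z_2$. Since $w_2 x_2 y_2$ is well-matched, your $\lmwm$ argument still goes through for $u_2 u_1 \cdot z$. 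This is exactly the composition depicted in the paper's transitivity figure, where after descending to the middle pair one traverses its full $p' \to p \to p'' \to p''$ segment before finally closing down to $q'$.
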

\begin{proof}
We have to show that $\Aless$ is transitive and irreflexive (asymmetry
follows from these two). The relation is obviously irreflexive because
of the absence of forbidden patterns. Transitivity is illustrated in
Figure~\ref{fig:transitive} for $(r,r') \Aless (p,p') \Aless (q,q')$
(the stack contents are omitted). The shown pattern is obtained from
$(r,r') \Aless (p,p') \Aless (q,q')$. The configurations with a frame
lead to a pattern witnessing $(r,r') \Aless (q,q')$.
\end{proof}
\begin{figure}
\begin{center}
\begin{tikzpicture}
\path
  node[draw](1){$q$}
  -- ++(1,1) node(2){$p$}   

  -- ++(1,1) node[draw](2r){$r$}   
  -- ++(1,0) node[draw](3r){$r'$}  
  -- ++(1,1) node[draw](4r){$r$}   
  -- ++(1,0) node[draw](5r){$r''$} 
  -- ++(1,-1) node[draw](6r){$r''$}

  -- ++(1,-1) node(3){$p'$}
  -- ++(1,1) node(4){$p$}   
  -- ++(1,0) node(5){$p''$} 
  -- ++(1,-1) node(6){$p''$}
  -- ++(1,-1) node[draw](7){$q'$}
;
\path[->] 
(1) edge node[above left]{} node[xshift=4,yshift=-4]{$\scriptstyle F$} (2)

 (2) edge node[above left]{} node[xshift=4,yshift=-4]{$\scriptstyle F$} (2r)
 (2r) edge[bend left] node[above]{} node[below]{$\scriptstyle \notin F$} (3r)
 (3r) edge node[above left]{} node[xshift=5,yshift=-5]{$\scriptstyle
  \notin F$} (4r)
 (4r) edge[bend left] node[above]{} (5r)
 (5r) edge node[above right]{} (6r)
 (6r) edge node[above right]{} (3)

 (3) edge node[above left]{} node[xshift=5,yshift=-5]{$\scriptstyle
  \notin F$} (4)
 (4) edge[bend left] node[above]{} (5)
 (5) edge node[above right]{} (6)
 (6) edge node[above right]{} (7)

;
\end{tikzpicture}
\end{center}
\caption{Transitivity of $\Aless$}\label{fig:transitive}
\end{figure}
For $\A$ without forbidden patterns, we assign to each pair of states
a number according to its height in the partial order, i.e., $\prio:
Q^2 \rightarrow \nat$ is a mapping satisfying
\[
\prio(q,q') = \max(\{0\} \cup \{\prio(p,p') \mid (p,p') \Aless
(q,q')\}) + 1.
\]
We need the following simple observation.
\begin{lemma} \label{lem:reachable-height}
Let $q_1,q_1',q_2,q_2' \in Q \setminus F$. If there is a stack content
$\sigma$ such that $(q_2,\bot) \xrightarrow{u} (q_1,\sigma)$
and $(q_1',\sigma) \xrightarrow{v} (q_2',\bot)$ with $uv \in
\lmwm$, then $\prio(q_2,q_2') \ge \prio(q_1,q_1')$.
\end{lemma}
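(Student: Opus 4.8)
The plan is to prove Lemma~\ref{lem:reachable-height} by showing that any witness for $(p,p') \Aless (q_1,q_1')$ can be extended, using the given runs, into a witness for $(p,p') \Aless (q_2,q_2')$ (or $p=q_2$, $p'=q_2'$ in a degenerate case), which immediately gives $\prio(q_2,q_2') \ge \prio(q_1,q_1')$ by the defining equation for $\prio$. More precisely, first I would observe that it suffices to show: for every pair $(p,p')$ with $(p,p') \Aless (q_1,q_1')$ we have $(p,p') \Aless (q_2,q_2')$. Granting this, every element of $\{\prio(p,p') \mid (p,p') \Aless (q_1,q_1')\}$ also lies in $\{\prio(p,p') \mid (p,p') \Aless (q_2,q_2')\}$, so taking $\max(\{0\}\cup\cdot)+1$ on both sides yields $\prio(q_2,q_2') \ge \prio(q_1,q_1')$.

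Next I would unfold the definition of $(p,p') \Aless (q_1,q_1')$: there are $p''$, stack contents $\tau,\tau'$, and a word $z_0$ with $u_0 z_0 \in \lmwm$ (for the appropriate $u_0$) such that $(q_1,\bot) \xrightarrow[F]{u_0} (p,\tau)$, the internal chain $(p,\bot)\xrightarrow[\notin F]{}(p',\bot)\xrightarrow[\notin F]{}(p,\tau')$, $(p,\bot)\xrightarrow{}(p'',\bot)$, $(p'',\tau')\xrightarrow{}(p'',\bot)$, and $(p'',\tau)\xrightarrow{z_0}(q_1',\bot)$. The key manipulation is to prepend and append the given runs. Using $(q_2,\bot)\xrightarrow{u}(q_1,\sigma)$, I would argue (since $\A$ is a DVPA, the run from $(q_1,\bot)$ under $u_0$ can be replayed above the stack $\sigma$, so that reading $u_0$ from $(q_1,\sigma)$ leads to $(p,\tau\sigma)$; and a final state is still seen on a step because all prefixes of $u$ have nonnegative height, hence the steps inside the $u_0$-portion are genuine steps of the composite run — this uses that $u\cdot u_0 \cdots$ is well-matched so the $\sigma$-part is never popped during $u_0$). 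Thus $(q_2,\bot)\xrightarrow[F]{u u_0}(p, \tau\sigma)$. Symmetrically, $(p'',\tau)\xrightarrow{z_0}(q_1',\bot)$ replayed above $\sigma$ gives $(p'',\tau\sigma)\xrightarrow{z_0}(q_1',\sigma)$, and then appending $(q_1',\sigma)\xrightarrow{v}(q_2',\bot)$ yields $(p'',\tau\sigma)\xrightarrow{z_0 v}(q_2',\bot)$. The three internal chains involving $p,p',p''$ and $\tau'$ are untouched. So with the new stack contents $\tau\sigma$ and $\tau'$, the same auxiliary state $p''$, and the new $z := z_0 v$, all six conditions of $(p,p')\Aless(q_2,q_2')$ hold, provided the matching-word side condition $(uu_0)(z_0 v) \in \lmwm$ holds.

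Verifying that side condition is the step I expect to be the main obstacle, and I would handle it as follows. We are given $uv \in \lmwm$ and $u_0 z_0 \in \lmwm$. I want $u u_0 z_0 v \in \lmwm$. Since $u_0 z_0$ is well-matched (minimally or a concatenation — here it is minimally well-matched by definition of $\Aless$, but being well-matched suffices), inserting a well-matched word $u_0 z_0$ in the "middle" of the well-matched word $uv$, i.e., between its prefix $u$ and suffix $v$, again produces a well-matched word; and since $uv = u\cdot v$ was minimally well-matched, $u$ begins with a call and $v$ ends with the matching return, so $u u_0 z_0 v = u\,(u_0 z_0)\, v$ is still of the form (call)(well-matched)(return), hence in $\lmwm$. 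I would make this rigorous by appealing to the inductive definition of well-matched words in Section~\ref{sec:preliminaries}: $u_0 z_0 \in \lmwm \subseteq$ well-matched, the "middle" $m$ of $uv = c\,m\,r$ is well-matched, so $m (u_0 z_0)$ is well-matched by the concatenation rule, and then $c\, m (u_0 z_0)\, r = u\, u_0 z_0\, v \in \lmwm$ by the last rule. Finally I would note the degenerate case where there is no pair below $(q_1,q_1')$ at all: then the left-hand max is just $0$, and $\prio(q_2,q_2')\ge 1 > 0$ is immediate since $\prio$ always takes values $\ge 1$; this is why the $\{0\}\cup$ appears in the definition and why no separate argument is needed there.
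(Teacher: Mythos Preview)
Your approach is exactly the paper's: the paper's one-line proof simply asserts that $(p,p') \Aless (q_1,q_1')$ implies $(p,p') \Aless (q_2,q_2')$ and concludes via the definition of $\prio$, without spelling out the run composition or the $\lmwm$ side condition that you work through.

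One small slip in your write-up: in the formal justification of $u u_0 z_0 v \in \lmwm$ you equate $c\,m(u_0 z_0)\,r$ with $u\,u_0 z_0\,v$, but this fails unless $v = r$. Writing $u = c\,m_1$ and $v = m_2\,r$ with $m_1 m_2 = m$, the word you must analyse is $c\,m_1 (u_0 z_0) m_2\,r$; what is needed is that $m_1 (u_0 z_0) m_2$ is well-matched, which holds because inserting a well-matched block at any split point of a well-matched word preserves well-matchedness (every prefix of $m_1$ has non-negative balance, $u_0 z_0$ is balanced, and $m_2$ supplies exactly the returns matching the open calls of $m_1$). With this correction your argument goes through.
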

\begin{proof}
The condition $(q_2,\bot) \xrightarrow{u} (q_1,\sigma)$ and
$(q_1',\sigma) \xrightarrow{v} (q_2',\bot)$ with $uv \in \lmwm$
implies that whenever $(q,q') \Aless (q_1,q_1')$, then also $(q,q')
\Aless (q_2,q_2')$. Thus, $\prio(q_2,q_2') \ge \prio(q_1,q_1')$ by
definition of $\prio$. 
\end{proof}

To make use of $\Aless$ and $\prio$ in the construction of $\A'$ we
need the following lemma. Note that this statement does not assume
that $\A$ as no forbidden patterns. 
\begin{lemma} \label{lem:polynomial}
The relation $\Aless \subseteq (Q \setminus F)^2$ can be computed in
time polynomial in the size of $\A$.
\end{lemma}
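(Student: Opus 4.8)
The goal is to show that $\Aless$ on pairs of non-final states is computable in polynomial time. The definition of $(p,p') \Aless (q,q')$ is an existential statement over auxiliary data: a state $p''$, stack contents $\sigma,\sigma'$, and connecting runs realizing the six reachability conditions, subject to the global constraint that the concatenation of the six connecting inputs lies in $\lmwm$. The plan is to reduce each of the six "$\xrightarrow{}$'' conditions to reachability in (variants of) the VPA's configuration graph restricted to well-matched inputs, observe that well-matchedness makes the stack irrelevant except for the bracket $\sigma$ carried across the outer call/return, and then assemble the whole thing as a polynomial-size fixed-point computation piggybacking on the transformation-graph machinery already introduced for the stair parity index.

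\textbf{Key steps.} First I would record the relevant summary relations on states that are each computable in polynomial time: (i) the relation $q \xrightarrow{w} q'$ with $w \in \lmwm$ (well-matched), which is exactly the transitive closure of the transformation graph of Section~\ref{sec:parity-index}, computable in polynomial time as described there; and (ii) its refinements $q \xrightarrow[F]{} q'$ and $q \xrightarrow[\notin F]{} q'$ recording whether some final state is (resp. no final state is) seen on a step along a well-matched run — these are computed by the same inductive saturation procedure, carrying an extra Boolean bit and taking the obvious monotone/"and''-closure under concatenation and under wrapping $w \mapsto cwr$. Crucially, for a well-matched input the run from $(q,\bot)$ returns to stack height $0$, so reachability of this kind is independent of what sits below, and the endpoints are determined purely by the two control states. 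Second, I would handle the three conditions in the pattern that are \emph{not} well-matched, namely those of the form $(q,\bot)\xrightarrow{} (p,\sigma)$ and $(p'',\sigma)\xrightarrow{}(q',\bot)$: here the input pushes (resp. pops) exactly the block $\sigma$. I would show that it suffices to consider $\sigma$ of polynomially bounded description — in fact one only needs to know, for each pair of states $(q,p)$, whether there is \emph{some} input and \emph{some} $\sigma$ with $(q,\bot)\xrightarrow{u}(p,\sigma)$ such that a final state occurs on a step, and symmetrically for the popping side with a matching $\sigma$; because in the definition $\sigma$ is existentially quantified and appears only in the two paired conditions (push $\sigma$ from $q$, pop $\sigma$ into $q'$ from $p''$) plus the well-matched condition $(p',\bot)\xrightarrow{}(p,\sigma')$ and $(p'',\sigma')\xrightarrow{}(p'',\bot)$. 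So $\sigma$ and $\sigma'$ need never be materialized: what matters is the existence of a single "context'' word whose push-phase ends at $(p,\sigma)$ with an $F$-step and whose pop-phase, read from some state $p''$, returns to $(q',\bot)$. This is again a reachability question in the configuration graph, and it can be phrased as a well-matched reachability after observing that the outer $u$-part followed (much later) by the $z$-part that pops the same $\sigma$ is precisely the "$cwr$''-wrapping shape, with $w$ being whatever well-matched material sits between. Third, with all these polynomial-size relations in hand, $(p,p') \Aless (q,q')$ becomes a bounded conjunction: quantify over $p'' \in Q$ (polynomially many choices) and check the six membership conditions in the precomputed relations, together with a consistency check that the six segments can be concatenated into a word of $\lmwm$ — which, given the shape constraints already encoded (each segment is either well-matched or a pure push/pop of a common bracket), reduces to checking that the push/pop brackets match up, a condition already guaranteed by reusing the same $(p,\sigma)$/$(\cdot,\sigma)$ endpoints. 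So the final loop is $O(|Q|^5)$ membership tests in tables of size $O(|Q|^2)$, clearly polynomial.

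\textbf{Main obstacle.} The delicate point is the second step: justifying that the existentially quantified stack contents $\sigma$ and $\sigma'$ can be eliminated without loss, i.e., that the reachability facts "$\exists \sigma\,(q,\bot)\xrightarrow{}(p,\sigma)$ with an $F$-step and $\exists$ matching pop $(p'',\sigma)\xrightarrow{}(q',\bot)$'' together with the well-matched conditions on $\sigma'$ are exactly capturable by a polynomial-size saturation, rather than smuggling in an unbounded search over stacks. The clean way to see this is to recognize that the whole forbidden-pattern shape, with the outer $uvwxyz \in \lmwm$ constraint, forces the run to be a nested well-matched computation except for the single outermost bracket pair, so the entire object is a run on a word of the form $c\,W\,r$ with $W$ well-matched (after the reduction to the model without explicit top-of-stack symbols); hence the inductive definition of $\lmwm$ and the associated saturation already compute everything needed, and the $\sigma$'s are just the stack heights reached at the matched positions — never needing to be written down. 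Once that structural observation is made, everything else is the routine bookkeeping of running the transformation-graph fixpoint with a handful of extra Boolean flags, and the polynomial bound is immediate. I would therefore devote the proof to making that structural reduction precise and then simply cite the Section~\ref{sec:parity-index} saturation for the complexity bound.
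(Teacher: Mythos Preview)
Your approach is workable but differs substantially from the paper's, and the place you flag as the ``main obstacle'' is indeed where your sketch is thinnest.

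The paper does not eliminate $\sigma$ and $\sigma'$ at all. It invokes the $\textit{pre}^*$/$\textit{post}^*$ saturation algorithms of Esparza et~al.\ \cite{EsparzaHRS00}, which compute in polynomial time a \emph{finite automaton} recognising the (regular) set of all configurations reachable from, respectively leading to, a given configuration; it then observes that these algorithms can be refined to variants $\textit{pre}^*_F$, $\textit{post}^*_F$, $\textit{pre}^*_{\notin F}$ that restrict to runs seeing / not seeing a final state on a step. Checking $(p,p')\Aless(q,q')$ is then, for each $p''$, a matter of intersecting such regular sets to find a common $\sigma$ (and similarly $\sigma'$). No structural argument about $\lmwm$ is needed; the coupling of the push and the pop is handled automatically by nonemptiness of an intersection of NFAs.

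Your summary-based route can be made to work, but not quite as you state it. The sentence ``the entire object is a run on a word of the form $cWr$'' is misleading: $\sigma$ may have length greater than one, so the $u$--$z$ pair corresponds to several nested bracket layers, not a single wrap. More importantly, the transformation graph of Section~\ref{sec:parity-index} is a relation on \emph{pairs} of states, whereas the coupling you need---``some $\sigma$ is pushed while going from $q$ to $p$ with an $F$-step, and that same $\sigma$ is popped while going from $p''$ to $q'$''---is genuinely a relation on \emph{four} states, because the state at the top of $\sigma$ on the push side ($p$) need not equal the state at the top on the pop side ($p''$). The fixpoint you have to run is therefore over $Q^4$ (with $F$/$\notin F$ tags), with the inductive step peeling off one call/return layer at a time; the analogous $4$-ary relation is also needed for the $\sigma'$-coupling between conditions~3 and~5. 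This is still polynomial, but your closing estimate of ``tables of size $O(|Q|^2)$'' is too small and signals that the coupling has not actually been encoded. Once you switch to $Q^4$-indexed tables and make the layer-by-layer saturation explicit, the argument goes through; what it buys over the paper's approach is self-containment (no appeal to \cite{EsparzaHRS00}), at the cost of redoing by hand what $\textit{pre}^*$/$\textit{post}^*$ give for free.
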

\begin{proof}
In \cite{EsparzaHRS00} it is shown that for a given configuration $p\sigma$
of $\A$ one can compute in polynomial time the set
$\textit{pre}^*(q\sigma)$ of configurations from which there is a run to
$p\sigma$, and the set $\textit{post}^*(q\sigma)$ of configurations that are
reachable from $p\sigma$ by a run. These sets of configurations are sets of
words over $\Gamma$, starting with a symbol from $Q$, and can be
represented by finite automata.

The algorithms from \cite{EsparzaHRS00} can be modified to consider
only runs that either see a final state on a step or do not see a
final state on a step, resulting in the sets  $\textit{pre}^*_F(q\sigma)$,
$\textit{pre}^*_{\notin F}(q\sigma)$, and similarly for $\textit{post}$.

For checking whether $(p,p') \Aless (q,q')$ it is sufficient to check
for each $p''$ if there are runs as required in the definition of
$\Aless$. This can be done by a suitable combination of the above
mentioned algorithms. For example, the stack content $\sigma$ would be
obtained by finding a $\sigma$ such that $p\sigma \in
\textit{post}^*_F(q\bot)$, and $p''\sigma
\in \textit{pre}^*(q'\bot)$. Similarly for $\sigma'$. 

All these computations can be done in polynomial time, and there are
only polynomially many combinations of states that have to tested.
\end{proof}

\paragraph{Informal description of the parity {DVPA}.}
In a B\"uchi stair condition, a final state visited in a run is
``erased'' (in the sense that it is not considered for acceptance), if
it is not on a step. If we construct a parity DVPA, then we cannot
erase states like this. Instead, we use the mechanisms of different
priorities to simulate erasing a state. Roughly, final states of the
stair B\"uchi automaton are translated into even priorities. If a
final state is erased, then this is compensated by visiting a higher
odd priority. For the choice of the correct priorities we use the
function $\prio$. 

In the description below, we use the terminology of ``$\A$ closing a
pair $(q,q')$ of states''. This means that $\A$ was in state $q$ at
some position and after reading a word $\lmwm$ it reached state $q'$,
i.e., $\A$ was in state $q$ before reading a call and reached $q'$
after the matching return.

As mentioned above, we somehow need to determine a priority for the
final states that are visited. Assume that the automaton is in
configuration $(q,\beta)$ and reads a word that increases the stack
height leading to some configuration $(p,\sigma\beta)$ and visiting
some final states on steps during this run. We do not know if these
final states remain on steps or will be erased at some point. But if
we knew, e.g., that whenever we come back to the stack content $\beta$
with, say, state $q'$, that the pair $(q,q')$ is of height at least
$i$, then we could signal priority $2i$ for the final states that we
have seen after $(q,\beta)$ and signal priority $2i+1$ if we indeed
close a pair $(q,q')$ on the level of $\beta$, and thus erasing all
the final states.

Assume that we have already seen the pattern shown in
Figure~\ref{fig:priority-pattern}, where $(p,p')$ is a pair of height
$i-1$. Then $\prio(q,q') \ge i$ for every state $q'$ that we could
reach when coming back to the stack height of the configuration with
$q$ at the beginning of this pattern.
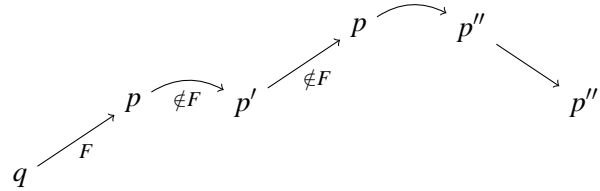
\begin{figure}
\begin{center}
\begin{tikzpicture}
\path
  node(1){$q$}
  -- ++(1.5,1) node(2){$p$}   
  -- ++(1.5,0) node(3){$p'$}  
  -- ++(1.5,1) node(4){$p$}   
                              
  -- ++(1.5,0) node(5){$p''$} 
                              
  -- ++(1.5,-1) node(6){$p''$}
;
\path[->] 
(1) edge node[above left]{} node[xshift=4,yshift=-4]{$\scriptstyle F$} (2)
 (2) edge[bend left] node[above]{} node[below]{$\scriptstyle \notin F$} (3)
 (3) edge node[above left]{} node[xshift=5,yshift=-5]{$\scriptstyle
  \notin F$} (4)
 (4) edge[bend left] node[above]{} (5)
 (5) edge node[above right]{} (6)
;
\end{tikzpicture}
\end{center}
\caption{The pattern for determining the priority of the
  states with $\prio(p,p') = i$}\label{fig:priority-pattern}
\end{figure}
In particular, if $h$ is the maximal height of a pair of states, and
$(p,p')$ are of height $h$, then we know that the final states between
$q$ and $p$ cannot all be deleted because this would require closing a
pair of height $h+1$. 

By a simple combinatorial argument, one can see that such a pattern as
shown in Figure~\ref{fig:priority-pattern} must occur if $\A$, before
returning to the stack height of $q$, has successively closed $m :=
|Q|^3 + 1$ pairs $(p_1.p_1'), \ldots, (p_m.p_m')$ of height $i-1$
without visiting final states on steps in between, as illustrated in
Figure~\ref{fig:detecting-height} (in the picture the pairs are closed
on increasing stack levels, however, they can also be on the same
stack level). If we denote by $p_i''$ the states of $\A$ the next time
it reaches the stack level of $(p_i,p_i')$ (indicated by the dotted
line in the picture), then one such triple of states must occur twice,
giving rise to a pattern witnessing that $\prio(q,q') \ge i$.

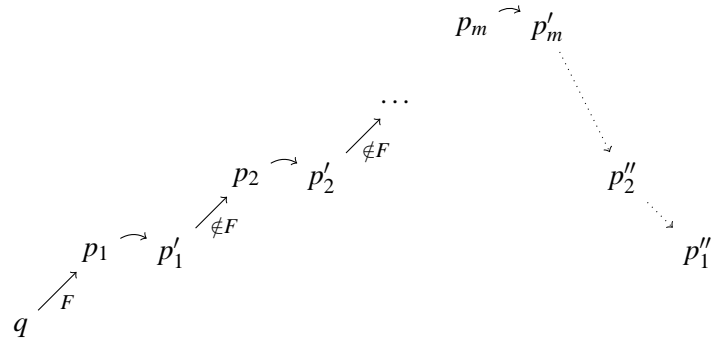
\begin{figure}
\begin{center}
\begin{tikzpicture}
\path
  node(1){$q$}
  -- ++(1,1) node(2){$p_1$}   
  -- ++(1,0) node(3){$p_1'$}  
  -- ++(1,1) node(4){$p_2$}   
  -- ++(1,0) node(5){$p_2'$}  
  -- ++(1,1) node(6){$\cdots$}   
  -- ++(1,1) node(7){$p_m$}   
  -- ++(1,0) node(8){$p_m'$}  
  -- ++(1,-2) node(9){$p_2''$}  
  -- ++(1,-1) node(10){$p_1''$}  
;
\path[->] 
(1) edge node[xshift=4,yshift=-4]{$\scriptstyle F$} (2)
 (2) edge[bend left] (3)
 (3) edge node[xshift=5,yshift=-5]{$\scriptstyle \notin F$} (4)
 (4) edge[bend left] (5)
 (5) edge node[xshift=5,yshift=-5]{$\scriptstyle \notin F$} (6)
 (7) edge[bend left] (8)
 (8) edge[dotted] (9)
 (9) edge[dotted] (10)
;
\end{tikzpicture}
\end{center}
\caption{Detecting that each pair with $q$ is of height at least $i$.} \label{fig:detecting-height}
\end{figure}


To detect such situations, $\A'$ maintains a counter with range from
$0$ to $m$ for each possible height of state pairs, and roughly
behaves as follows:
\begin{itemize}
\item Whenever a pair of height $i$ is closed by $\A$, then counter
  $i$ is increased by one (and for technical reasons counter number
  $0$ is increased whenever $\A$ visits a non-final state after
  reading a call or an internal symbol). To detect the closed pairs,
  $\A'$ stores the states of $\A$ on the stack, and the height of
  state pairs can be computed by Lemma~\ref{lem:polynomial}.
\item There is an additional flag for each $i \in \{0, \ldots, h\}$
  indicating whether counter number $i$ was reset because a final
  state of $\A$ has been visited (the flag is set to $1$), or because
  it reached its maximal value $m$ (the flag is set to $0$).
\item When counter number $i$ reaches value $m$ (if several counters
  reach $m$ at the same time we take the maximal such $i$), then the
  automaton signals priority $2i+2$ if the flag number $i$ is set, and
  $2i+1$ if the flag is not set. In the next transition the counter is
  reset.
\end{itemize}

\paragraph{Formal description of the parity DVPA.}
\newcommand{\stack}[1]{[ #1 ]}
\newcommand{\push}[2]{\begin{array}{c}#1 \\ #2 \end{array}}
Recall that $m := |Q|^3+1$ and that $h$ is the maximal height of a
pair of states from $Q \setminus F$.

\begin{itemize}
\item The states of $\A'$ are of the form $(q,\chi,f)$, where $q \in Q$ is a
state of $\A$, $\chi: \{0, \ldots, h\} \rightarrow \{0, \ldots, m\}$
represents the counters mentioned above, and $f: \{0, \ldots, h\}
\rightarrow \{0,1\}$ represents the flag mentioned in the informal
description.

\item The stack symbols of $\A'$ are of the form
$\stack{Z,(q,\chi,f)}$, where $Z$ is a stack symbol of $\A$
and $(q,\chi,f)$ is a state of $\A'$.

\item We now define when $\A'$ can move from state $(q,\chi,f)$ to state
$(q',\chi',f')$, depending on whether it reads a call, an internal
action, or a return. In all cases, $q'$ is the next state of $\A$,
i.e., $\A'$ simulates $\A$ in its first component.  If $q' \in F$,
then $\chi' = 0$ and $f' = 1$, i.e., the constant functions mapping
everything to $0$ and $1$, respectively. The other cases for $\delta'$
are listed below:
\begin{description}
\item[Call:] $(q,\chi,f) \xrightarrow{c}
  \push{(q',\chi',f')}{\stack{Z,(q,\chi,f)}}$ if $\delta(q,c) =
  (Z,q')$, $q' \notin F$, and
\[
\chi'(i) = \left\{
\begin{array}{l}
(\chi(i) \mod m) + 1 \mbox{ if } i = 0, \\
(\chi(i) \mod m)  \mbox{ otherwise, }
\end{array} \right.
\hspace{.5cm}
f'(i) = \left\{
\begin{array}{l}
f(i) \mbox{ if } \chi(i) < m, \\
0  \mbox{ otherwise. }
\end{array} \right.
\]

\item[Internal action:] $(q,\chi,f) \xrightarrow{a} (q',\chi',f')$ if
  $\delta(q,a) = q'$, $q' \notin F$, and $\chi'$ and $f'$ are as in
  the case of a call symbol.
\item[Return:] $\push{(q,\chi,f)}{\stack{Z,(q'',\chi'',f'')}}
\xrightarrow{r} (q',\chi',f')$ if $\delta(q,Z,r) = q'$, $q'
\notin F$, and
\[
\begin{array}{l}
{
\chi'(i) = \left\{
\begin{array}{l}
(\chi''(i) \mod m) + 1 \mbox{ if } q'' \notin F \mbox{ and } i \le \prio(q'',q'), \\
(\chi''(i) \mod m)  \mbox{ otherwise, }
\end{array} \right.
}
\\ \\
{
f'(i) = \left\{
\begin{array}{l}
f''(i) \mbox{ if } \chi''(i) < m, \\
0  \mbox{ otherwise. }
\end{array} \right.
}
\end{array}
\]
\end{description}
\item The priority function $\Omega'$ of $\A'$ is defined as follows
\[
\Omega'(q,\chi,f) = 
\left\{
\begin{array}{l}
0 \mbox{ if } \chi(i) < m \mbox{ for all } i,\\
2d + 1 + f(d) \mbox{ if }  d = \max\{i \mid \chi(i) = m\}.
\end{array} \right.
\]
\item The initial state is $(q_0,\chi_0,f_0)$ with $\chi_0 = 0$ and
  $f_0 = 1$.
\end{itemize}

\begin{lemma} \label{lem:DVPAconstruction}
The parity {DVPA} $\A'$ is equivalent to $\A$.
\end{lemma}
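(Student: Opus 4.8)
The plan is to prove the stronger statement that, on every input $w$, the run of $\A'$ signals an even maximal priority infinitely often if and only if the run of $\A$ on $w$ visits a final state on infinitely many steps. First I would record the basic simulation facts: $\A'$ copies the $Q$-component of $\A$ in its first coordinate and pushes/pops one symbol $\stack{Z,(q,\chi,f)}$ for each symbol $Z$ that $\A$ pushes/pops, so the two runs have identical stack heights and hence identical step positions. Moreover, since the absence of forbidden patterns makes $\Aless$ a strict partial order on the finite set $(Q\setminus F)^2$, its height $h$ is finite and $\prio$ takes values in $\{1,\dots,h\}$; consequently $\Omega'$ realises only the finitely many values $0,\dots,2h+2$, so $\A'$ is a genuine parity DVPA. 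The single design feature that drives everything is that a return recomputes the counter from the value $\chi''$ stored at the matching call rather than from the current counter; thus the whole contents of a matched call--return block are discarded and replaced by the one ``pair-closing'' event for $(q'',q')$. Since a final state that is not on a step always lies strictly inside some block that is later closed, its reset ($\chi:=0$, $f:=1$) is undone when that block is closed.

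The heart of the argument is to reduce the counter/flag dynamics along the sequence of steps to an abstract counter machine. By induction on the run I would prove the invariants already suggested informally: reading along the steps, the counters behave as if fed a sequence of \emph{events}, where each event is the closing of a pair of some height $k$ (or a non-final call/internal move, giving $k=0$) and increments exactly $\chi(0),\dots,\chi(k)$, interrupted by \emph{resets} that occur at precisely those final states lying on steps, all non-step final states having been erased by the mechanism above (here Lemma~\ref{lem:reachable-height} guarantees that the height assignment is monotone under closing, so that tying counter $i$ to pairs of height $\ge i$ is consistent). From the update rules one reads off that $\chi(i)$ equals, modulo $m$, the number of events of height $\ge i$ since its last reset, that $\chi(i)$ reaches the value $m$ precisely after $m$ such events with no intervening reset, and that $f(i)$ equals $1$ at the moment $\chi(i)=m$ if and only if this is the first overflow of counter $i$ since its last reset (afterwards $f(i)$ is forced to $0$ until the next reset). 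Because every closed pair has height $\le h$, the counters $\chi(i)$ with $i>h$ never increment, so only finitely many priorities ever occur.

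With these invariants both directions follow from one uniform analysis. Let $d^{*}$ be the largest index whose counter overflows infinitely often; if there is none then eventually no counter reaches $m$, priority $0$ is signalled infinitely often, and both acceptance conditions hold trivially, so assume $d^{*}$ exists (note $d^{*}\le h$). Suppose first that $\A$ visits final states on infinitely many steps, i.e.\ resets occur infinitely often. Each interval between consecutive resets is finite and hence contains only finitely many overflows of counter $d^{*}$; since counter $d^{*}$ overflows infinitely often, it must do so inside infinitely many distinct reset-intervals, and in each such interval its first overflow carries flag $1$ and signals the even priority $2d^{*}+2$. By maximality of $d^{*}$ no larger priority occurs infinitely often, so $2d^{*}+2$ is the maximal priority seen infinitely often and $\A'$ accepts. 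Conversely, if only finitely many steps carry a final state, there is a last reset; after it counter $d^{*}$ still overflows infinitely often but now always with flag $0$, signalling the odd priority $2d^{*}+1$ infinitely often, while $2d^{*}+2$ and all higher priorities occur only finitely often; hence the maximal priority seen infinitely often is odd and $\A'$ rejects. Thus $L_\omega(\A')=L_\omega(\A)$.

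I expect the main obstacle to be the rigorous justification of this event-sequence reduction: proving by induction on the stack structure that the recomputation-from-the-matching-call genuinely erases every non-step final state at every level, so that the counters observed along the steps are exactly those produced by the abstract machine whose only resets are the surviving (on-step) final states. Carrying this invariant uniformly through both the bounded-stack regime (where the steps are eventually all at one fixed height) and the unbounded-stack regime (where the step heights strictly increase to infinity), and checking that the counter index $i$ is correctly tied to $\prio$ through the combinatorial pattern of Figure~\ref{fig:detecting-height}, is the delicate bookkeeping that the remainder of the proof must handle.
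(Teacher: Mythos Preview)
Your forward direction is sound but argued differently from the paper, which instead records a local invariant (in every reachable state, $f(i)=1$ implies $f(j)=1$ and $\chi(i)\ge\chi(j)$ for all $j\ge i$) and uses it to show directly that any odd priority $2i+1$ emitted after a step-final-state is preceded, after that same final state, by an even priority $\ge 2i+2$; no global $d^*$ is named. The genuine gap is in your backward direction. Your whole analysis runs through the ``abstract counter machine along the steps'', but $\A'$ is a \emph{standard} parity DVPA: its condition is evaluated at \emph{every} position, not only at steps. Between two consecutive steps $k_j$ and $k_{j+1}$ the run of $\A$ may visit non-step final states; each of these resets $\chi:=0$, $f:=1$ in the actual run of $\A'$, and from there some counter $i$ can climb back to $m$ with flag $1$, emitting the even priority $2i+2$ at a non-step position $\ell$. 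Your ``erasure'' observation only says that this reset is overwritten once the matching return is executed; it says nothing about the priorities emitted in between. So the sentence ``$2d^{*}+2$ and all higher priorities occur only finitely often'' is exactly the unproved step.

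This is precisely where the choice $m=|Q|^3+1$ and the pattern of Figure~\ref{fig:detecting-height} do their real work---not in justifying the step-level reduction you flag as ``delicate bookkeeping'', but in bounding even priorities emitted at non-step positions. The paper's argument runs as follows: if priority $2i+2$ is emitted at some non-step $\ell$ with $k_j<\ell<k_{j+1}$ while $f_{k_j}(i)=0$, then between the intervening reset and $\ell$ the automaton has closed $m$ pairs of height $\ge i$ on relative steps of the segment; pigeonhole on the triples $(p_n,p_n',p_n'')$ forces a repetition, which witnesses $(p,p')\Aless(q_{k_j},q_{k_{j+1}})$ and hence $i<\prio(q_{k_j},q_{k_{j+1}})\le d$, giving $2i+2<2d+1$. (There is also a separate case $f_{k_j}(i)=1$, handled via $i\neq d$, and a short special argument for $d=0$.) You have identified the right combinatorial ingredient but located it in the wrong part of the proof; without it your rejection argument does not go through.
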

\begin{proof}
We note the following helpful fact on reachable states $(q,\chi,f)$ of $\A'$: 
\begin{enumerate}[(1)]
\item If $f(i) = 1$ for some $i$, then $f(j) = 1$ and $\chi(i) \ge \chi(j)$
 for all $j \ge i$. The initial state satisfies this property, and if
 we apply the definition of the transition function to a state
 satisfying the property, then one can easily verify that the
 resulting state also satisfies it.
\end{enumerate}
Now consider an accepting run of $\A$. We show that the corresponding
run of $\A'$ is also accepting.  Let the $k$th state in this run of
$\A'$ be $(q_k,\chi_k,f_k)$.

If $\ell$ is a step in the run and $q_{\ell}$ is a final state of
$\A$, then all flags are set to 1 at this point. From the definition
of $\delta'$ follows that these flags can only be set to $0$ if the
corresponding counter reaches value $m$ (we assume that the final
state occurs on a step and therefore the run never accesses the stack
symbols below). Now assume that $\A'$ signals some odd priority $2i+1$
at some position $k$ after this final state. This means that $i$ is
maximal with $\chi_k(i) = m$, and furthermore $f_k(i) = 0$. But if
$f_k(i) = 0$, then there must be some $k'$ with $\ell < k' < k$ such
that $f_{k'}(i) = 1$ and $\chi_{k'}(i) = m$ because this is the only
situation in which the flag is set to $0$.

From (1) we conclude that $f_{k'}(j) = 1$ for all $j \ge i$ and hence
$\Omega'(q_{k'},\chi_{k'},f_{k'})$ is an even priority bigger than
$2i+1$. Thus, for each odd priority occurring after a final state on a
step there is a bigger even priority also occurring after this final
state.  Hence, the run of $\A'$ is also accepting.

For the other direction, consider a non-accepting run of $\A$ and as
before let $(q_k,\chi_k,f_k)$ be the $k$th state in the corresponding
run of $\A'$. There is a position such that after this position no
final states of $\A$ occur on a step. From now on we only consider
this part of the run.  

Consider the sequence $k_1,k_2,k_3, \ldots$ of steps. As no final
state occurs on a step we have the following relation between the
counter values at two successive steps:
\begin{enumerate}[(i)]
\item If $k_{j+1}$ was reached from $k_j$ by reading a call or an
  internal symbol, then the only change of the counters is
  $\chi_{k_{j+1}}(0) = (\chi_{k_{j}}(0) \mod m)+1$. The other values
  remain the same.
\item If $k_{j+1}$ was reached from $k_j$ by reading a minimally
  well-matched word, then the counters are updated as follows:
\[
\chi_{k_{j+1}}(i) = \left\{
\begin{array}{l}
(\chi_{k_j}(i) \mod m) + 1 \mbox{ if } i \le \prio(q_{k_j},q_{k_{j+1}}), \\
(\chi_{k_j}(i) \mod m)  \mbox{ otherwise. }
\end{array} \right.
\]
\end{enumerate}
The flags between two successive steps are updated as follows:
\[
f_{k_{j+1}}(i) = \left\{
\begin{array}{l}
f_{k_j}(i) \mbox{ if } \chi_{k_j}(i) < m, \\
0  \mbox{ otherwise. }
\end{array} \right.
\]
Now let $d$ be the highest counter that is infinitely often increased
on a step (such a counter exists because counter $0$ is increased for
each call and each internal symbol). Then the highest priority
occurring on a step is obviously $2d+1$ because after the first reset
of counter $d$ to $0$ the flag number $d$ is $0$ on all following
steps. 

We have to show that no even priority higher than $2d+1$ can occur
infinitely often. Restrict the part of the run under consideration
further to the suffix on which no counter higher than $d$ is
incremented on a step. We can conclude that for successive steps
connected by a minimally well-matched word we have that
$\prio(q_{k_j},q_{k_{j+1}}) \le d$.

We first assume that $d > 0$. At the end of the proof we briefly
explain the case $d = 0$.

Pick $j$ such that there is $\ell$ with $k_j < \ell < k_{j+1}$ and
$\Omega'(q_\ell, \chi_\ell, f_\ell) = 2i+2$ (if no such position
exists, then the run of $\A'$ is clearly rejecting). For simplicity
let $(q_{k_j}, \chi_{k_j}, f_{k_j}) = (q,\chi,f)$ and $(q_{k_{j+1}},
\chi_{k_{j+1}}, f_{k_{j+1}}) = (q',\chi',f')$. 

We now consider the part of the run from $k_j$ to $\ell$ and show that
$i < \prio(q,q') \le d$ and hence $2i+2 < 2d+1$.

Since $\Omega'(q_\ell, \chi_\ell, f_\ell) = 2i+2$ we know that
$f_\ell(i) = 1$ and $i$ is maximal with $\chi_\ell(i) = m$. If $i = 0$
we know that $i < d$ by our assumption $d > 0$.  If $i > 0$, at
position $\ell$ a pair of states of height $i$ is closed.  From
Lemma~\ref{lem:reachable-height} we obtain that $d \ge \prio(q,q') \ge
i$.

There are two cases to consider.  If flag number $i$ was already set
to $1$ at position $k_j$, i.e., $f(i) = 1$, then $i \not= d$ (as we
only consider the part of the run where the flag for $d$ remains 0
forever on the steps).  Together with $d \ge i$ we get $d > i$.

If $f(i) = 0$, then it must be reset to $1$ by visiting a final
state. At the same time the counters are reset to $0$. Then $m$ pairs
of height $i$ have to be closed to reach the value $\chi_\ell(i) =
m$. Furthermore, these pairs have to closed at positions that
correspond to steps in the part of the run between $k_j$ and $\ell$
(not steps in the whole run). Let these pairs be $(p_1,p_1'),
(p_2,p_2'), \ldots, (p_m,p_m')$ (see
Figure~\ref{fig:detecting-height}) and the corresponding pairs of
positions be $(\ell_1,\ell_1') \ldots, (\ell_m,\ell_m')$. Now consider
for each $n$ the minimal position $\ell_n''$ with $\ell \le \ell_n''
\le k_{j+1}$ such that the stack height at $\ell_n'$ and $\ell_n''$ is
the same. Let $p_n''$ denote the state at the corresponding
position. By the choice of $m$ we get that there are $n_1 \not= n_2$
such that $(p_{n_1},p_{n_1}',p_{n_1}'') =
(p_{n_2},p_{n_2}',p_{n_2}'')$. Denote the corresponding triple by
$(p,p',p'')$. This triple witnesses that $\prio(q,q') > \prio(p,p') =
i$ as illustrated in the following picture:
\begin{center}
\begin{tikzpicture}
\path
  node[draw](1){$q$}
  -- ++(1.5,1)  node[draw](2){$p$}   
  -- ++(1.5,0)  node[draw](3){$p'$}  
  -- ++(1.5,1)  node[draw](4){$p$}    
  -- ++(1.5,0)  node(4'){$p'$}       
  -- ++(1.5,0)  node[draw](5){$p''$} 
  -- ++(1.5,-1) node[draw](6){$p''$} 
  -- ++(1.5,-1) node[draw](7){$q'$}
;
\path[->] 
(1) edge node[above left]{} node[xshift=4,yshift=-4]{$\scriptstyle F$} (2)
 (2) edge[bend left] node[above]{} node[below]{$\scriptstyle \notin F$} (3)
 (3) edge node[above left]{} node[xshift=5,yshift=-5]{$\scriptstyle
  \notin F$} (4)
 (4) edge[bend left] node[above]{} (4')
 (4') edge[bend left] node[above]{} (5)
 (5) edge node[above right]{} (6)
 (6) edge node[above right]{} (7)
;
\end{tikzpicture}
\end{center}
It remains to consider the case $d = 0$. Consider only the suffix of
the run after the position where the flag for counter $0$ remains $0$
on all steps and no other counter is increased on a step anymore. Then
all pairs closed on steps are of height $0$ and by
Lemma~\ref{lem:reachable-height} pairs closed between two successive
steps are also of height $0$. So the maximal priority that we can see
on this part of the run would be $2$. For this to happen, the flag for
counter $0$ must be $1$ and counter $0$ must have value $m$. The flags
are only set to $1$ if a final state of $\A$ is reached, and at the
same time the counters are set to $0$.  Let $q,q'$ be the states at
two successive steps, and assume that in between a final state is
seen. Let $p$ be the state after the symbol following the final
state. If this symbol is a call or an internal, then $(p,p) \Aless
(q,q')$ (choosing $p'' = p$), contradicting $\prio(q,q') = 0$. Thus,
each final state of $\A$ is immediately followed by a return. Thus,
whenever the flag is set to $1$ by a final state, it is immediately
reset to $0$ in the next transition, and thus priority $2$ never
occurs (on the considered part of the run).
\end{proof}

Combining Lemmas~\ref{lem:forbidden-pattern} and
\ref{lem:DVPAconstruction} we obtain the following. 
\begin{theorem}
A stair B\"uchi DVPA $\A$ is equivalent to a parity DVPA if, and only
if, it does not contain any forbidden patterns. 
\end{theorem}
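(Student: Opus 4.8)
The plan is to prove the two directions of the equivalence separately, each time quoting one of the results assembled above; this final step itself requires no new construction.

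First I would handle the ``only if'' direction by contraposition: assuming that $\A$ contains a forbidden pattern, I would derive that $\A$ is not equivalent to any parity DVPA. Here Lemma~\ref{lem:forbidden-pattern} produces $\lsu \wred L_\omega(\A)$ directly from the pattern, and Lemma~\ref{lem:lsu-reduction} then rules out the existence of a parity DVPA for $L_\omega(\A)$. Chaining the two gives exactly the contrapositive of the desired implication.

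For the ``if'' direction I would assume that $\A$ has no forbidden pattern and exhibit an equivalent parity DVPA. I would first observe that in this case $\Aless$ is a strict partial order on $(Q\setminus F)^2$ (as shown above), so on the finite set $Q^2$ the height function $\prio$ is well defined and finitely bounded, and by Lemma~\ref{lem:polynomial} it is even effectively computable. With $\prio$ in hand I would take $\A'$ to be precisely the parity DVPA of the formal description (states $(q,\chi,f)$, the augmented stack symbols, the transition rules for calls, internal actions and returns, and the priority function $\Omega'$), and then invoke Lemma~\ref{lem:DVPAconstruction}, which asserts $L_\omega(\A') = L_\omega(\A)$. Hence $\A$ is equivalent to a parity DVPA, completing the equivalence.

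The genuine difficulty lies not in this assembly but in the two ingredients it rests upon. The main obstacle is Lemma~\ref{lem:DVPAconstruction}: establishing that the counter-and-flag bookkeeping of $\A'$ faithfully simulates the stair B\"uchi condition, i.e.\ that along any accepting run of $\A$ every odd priority signalled by $\A'$ is eventually dominated by a larger even priority, while along any non-accepting run no spurious large even priority can appear infinitely often. That is where the choice $m = |Q|^3+1$, the monotonicity invariant on the flags, and Lemma~\ref{lem:reachable-height} are all needed. Once Lemmas~\ref{lem:forbidden-pattern} and~\ref{lem:DVPAconstruction} are available, the theorem is an immediate corollary.
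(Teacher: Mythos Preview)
Your proposal is correct and matches the paper's own proof, which simply states that the theorem follows by combining Lemma~\ref{lem:forbidden-pattern} and Lemma~\ref{lem:DVPAconstruction}. You spell out the intermediate step through Lemma~\ref{lem:lsu-reduction} and the well-definedness of $\prio$ more explicitly than the paper does, but the route is identical.
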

The relation $\Aless$ can be computed and checked for irreflexivity in
polynomial time. Hence we get the following corollary.
\begin{corollary}
For a stair B\"uchi DVPA $\A$ it is decidable in polynomial time if it is
equivalent to some parity DVPA.
\end{corollary}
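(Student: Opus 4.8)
The plan is to obtain the Corollary simply by combining the Theorem just stated (that a stair B\"uchi DVPA $\A$ is equivalent to a parity DVPA if and only if it contains no forbidden pattern) with Lemma~\ref{lem:polynomial}. By that Theorem, deciding equivalence to a parity DVPA is the same as deciding the \emph{absence} of forbidden patterns, so the whole task reduces to showing that the latter property can be tested in polynomial time.

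The first step is to turn the existence of a forbidden pattern into a statement about the relation $\Aless$. I would show: $\A$ has a forbidden pattern if and only if $\Aless$ is not irreflexive, i.e.\ some pair $(q,q') \in (Q \setminus F)^2$ satisfies $(q,q') \Aless (q,q')$. The direction ``no forbidden pattern $\Rightarrow$ $\Aless$ irreflexive'' is exactly what is recorded in the proof that $\Aless$ is a strict partial order. For the converse, a forbidden pattern for $q,q',q''$ is essentially already a witness for $(q,q') \Aless (q,q')$ (take the same runs with $p=q$, $p'=q'$, $p''=q''$); the only thing to reconcile is that the definition of $\Aless$ asks for $uz \in \lmwm$ whereas a forbidden pattern supplies $uvwxyz \in \lmwm$. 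Here one uses that the interposed runs are stack-balanced relative to the bottom level: $v$ reads a well-matched word, $w$ adds exactly $|\sigma'|$ net calls (and, being a run starting on the empty stack, never pops below it), $x$ reads a well-matched word, and $y$ removes exactly those $|\sigma'|$ symbols. A short stack-height bookkeeping argument then shows that, after its first symbol, the (relative) stack height of $uvwxyz$ stays positive and returns to $0$ only at the very end, so $uvwxyz \in \lmwm$ iff $uz \in \lmwm$. I expect this word-condition reconciliation to be the only point that requires any care; everything else is routine.

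The second step is purely algorithmic. By Lemma~\ref{lem:polynomial}, the finite relation $\Aless \subseteq (Q \setminus F)^2$ can be tabulated in time polynomial in $|\A|$ (running the modified $\textit{pre}^*/\textit{post}^*$ computations of \cite{EsparzaHRS00} and trying each candidate auxiliary state $q'' \in Q$). Once $\Aless$ has been computed, testing whether some pair is related to itself is a linear scan. Together with the first step and the Theorem, this yields a polynomial-time decision procedure for equivalence of $\A$ to a parity DVPA. I would emphasise that the explicit construction of the parity DVPA $\A'$ with its counters and flags is not needed here: only $\Aless$ and its irreflexivity are consulted, which is precisely why the resulting bound is polynomial rather than a mere decidability statement.
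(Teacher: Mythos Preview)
Your proposal is correct and follows essentially the same route as the paper, which simply states that $\Aless$ can be computed and checked for irreflexivity in polynomial time and then invokes the preceding theorem. You add a worthwhile piece of justification the paper leaves implicit, namely the equivalence between the existence of a forbidden pattern and the failure of irreflexivity of $\Aless$, including the reconciliation of the word conditions $uvwxyz \in \lmwm$ and $uz \in \lmwm$; this is indeed the only point requiring any care, and your stack-height bookkeeping handles it.
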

A direct consequence of Lemma~\ref{lem:DVPAconstruction} is:
\begin{theorem}
If a stair B\"uchi DVPA $\A$ is equivalent to some parity DVPA, then
we can effectively construct such a parity DVPA.
\end{theorem}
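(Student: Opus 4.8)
The plan is to observe that, under the hypothesis, the parity DVPA $\A'$ from the formal description above is both well-defined and effectively constructible, and then to quote Lemma~\ref{lem:DVPAconstruction} for correctness.

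First I would unwind the hypothesis. Since $\A$ is equivalent to some parity DVPA, the preceding theorem tells us that $\A$ contains no forbidden pattern. By the earlier lemma asserting that, in the absence of forbidden patterns, $\Aless$ is a strict partial order on pairs of states, the pair $\bigl((Q\setminus F)^2, \Aless\bigr)$ is then a finite strict poset, hence well-founded; consequently the recursion $\prio(q,q') = \max(\{0\} \cup \{\prio(p,p') \mid (p,p') \Aless (q,q')\}) + 1$ has a (unique) solution, and $h := \max_{(q,q') \in (Q\setminus F)^2} \prio(q,q')$ is a well-defined natural number. So the definition of $\A'$ — which presupposes exactly these objects — makes sense.

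Second I would argue effectivity. By Lemma~\ref{lem:polynomial}, $\Aless$ is computable from $\A$ (in polynomial time). From $\Aless$ one computes $\prio$ by a routine bottom-up evaluation along the order — repeatedly assign the next height to a pair all of whose $\Aless$-predecessors already carry a value — and reads off $h$ and $m := |Q|^3 + 1$. With $Q, \Gamma, \delta, \prio, h$ and $m$ available, the state set, stack alphabet, transition function $\delta'$, priority function $\Omega'$ and initial state of $\A'$ are obtained directly from the explicit formulas of the formal description. Hence $\A'$ is effectively constructible, and Lemma~\ref{lem:DVPAconstruction} gives $L_\omega(\A') = L_\omega(\A)$, which is the claim.

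The only point that genuinely uses the hypothesis — and hence the only place requiring care — is that the construction of $\A'$ presupposes that $\A$ has no forbidden pattern, so that $\prio$ exists; this is precisely what the assumption together with the preceding theorem provides. Beyond that the argument is pure bookkeeping, so I do not expect a real obstacle.
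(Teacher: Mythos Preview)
Your proposal is correct and matches the paper's approach: the paper simply states that the theorem is a direct consequence of Lemma~\ref{lem:DVPAconstruction}, and you have spelled out that consequence in detail, making explicit why the hypothesis guarantees (via the preceding characterization theorem) that $\Aless$ is a strict partial order, so that $\prio$, $h$, and hence $\A'$ are well-defined and effectively computable.
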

It seems possible to lift the methods presented in this section to
decide for general stair parity DVPAs whether the stair condition is
required. We have, however, not yet worked out the details. A simpler
question can be solved using the game theoretic approach for deciding
the parity index problem for DPDAs: Given a stair parity DVPA $\A$ and
a set $P$ of priorities, we can decide whether there is a parity DVPA
using the priorities from $P$ that accepts $L_\omega(\A)$ by using the
classification game. In this case, the classification game could be
formalized using a combination of a classical parity and a stair
parity condition. Pushdown games with such a winning condition can be
solved with the methods from \cite{LMS2004VPG}.

\section{Conclusion} \label{sec:conclusion}
We have considered several decidability questions for
$\omega$-DPDAs. The regularity and equivalence problem are still open
for the full class of $\w$-DPDAs. We have sketched some partial
results from \cite{LodingR12} showing the decidability for these two
problems for the class of weak $\w$-DPDAs by a reduction to the
corresponding problems for DPDAs on finite words. It seems that
a decidability result for the full class of $\w$-DPDAs requires new
ideas.

In the second part we have analyzed the problem of simplifying the
acceptance condition of $\w$-DPDAs. We have shown that the smallest
number of priorities required for accepting the language of a given
parity DPDA can be computed. For the standard parity condition we have
used a game approach. For stair parity DVPAs, this problem can be
solved by a much simpler algorithm that uses a reduction to the
computation of the parity index of a finite automaton.

We have also shown that for stair B\"uchi DVPAs it is decidable
whether the stair condition is required or whether there exists an
equivalent parity DVPA. It seems that the methods used in the proof
can be generalized from stair B\"uchi conditions to arbitrary stair
parity conditions but we have not worked out the details.

\bibliography{bib-afl}
\bibliographystyle{eptcs}

\end{document}